\preprint{MIT-CTP/5265}
\title{Holography Abhors Visible Trapped Surfaces}
\author{Netta Engelhardt}
\author{and \AA{}smund Folkestad}
\emailAdd{engeln@mit.edu}
\emailAdd{afolkest@mit.edu}
\affiliation{Center for Theoretical Physics, Massachusetts Institute of Technology, \\Cambridge, MA 02139, USA}
\abstract{We prove that consistency of the holographic dictionary implies a hallmark prediction of the weak cosmic censorship
    conjecture: that in classical gravity, trapped surfaces lie behind event horizons. In particular, the existence of a trapped surface
    implies the existence of an event horizon, and that furthermore this event horizon must be outside of the trapped
    surface. More precisely, we show that the formation of event horizons outside of a strong gravity region is a direct consequence of causal wedge inclusion, which is required by entanglement wedge reconstruction.   We make few  assumptions beyond the absence of evaporating
    singularities in strictly classical gravity. We comment on the implication that spacetimes with naked trapped surfaces do not admit a holographic dual, note a possible application to holographic complexity, and speculate on the dual CFT interpretation of a trapped surface.} 
\begin{document}

\maketitle

\section{\label{sec:intro}Introduction}
Foundational results in modern gravitational physics, from black hole thermodynamics~\cite{Haw71, Bek72, BarCar73} to topological censorship~\cite{FriSch93}, often rely on the absence of strong gravity outside of horizons, or more precisely the weak cosmic censorship conjecture~\cite{Pen69}. Gedankenexperiments that rely upon the formation of black holes from generic matter collapse require an implicit assumption that matter typically coalesces into black holes and not into naked singularities. 

This assumption appears reasonable given the apparent absence of naked singularities in observational data to date.
The more precise statement of weak cosmic censorship (which we shall henceforth refer to simply as
``cosmic censorship'') is the conjecture that the Einstein equation evolves generic regular initial data with certain
asymptotics to a complete asymptotic infinity $\mathscr{I}$~\cite{GerHor79, Wal97}. In its initial formulation it applied
specifically to asymptotically flat initial data in four dimensions~\cite{Pen69}, but it has been generalized to other dimensions and asymptotics when a suitable
asymptotic infinity $\mathscr{I}$ exists (see e.g.~\cite{SantosTalk}). 

Developments over the past decade have eroded confidence in the validity of this statement at least in its more general form: counterexamples have been found in asymptotically AdS$_{4}$~\cite{HorSan16,CriSan16, CriHor17, CriHor18, HorSan19} and in higher
dimensional asymptotically flat space~\cite{GreLaf93,GreLaf94,LehPre10,FigKun15,FigKun17,AndEmp18,AndEmp19,AndFig20}. Even in four-dimensional asymptotically flat space, non-generic initial data can evolve to form naked singularities~\cite{Cho92,Chr94,Ham95}. While it is in principle possible that cosmic censorship is in fact correct in four-dimensional asymptotically flat space (and for generic initial data), numerous extant violations in other settings suggest otherwise. Since results that rely on cosmic censorship are expected to be applicable in broad generality in arbitrary dimensions and often for AdS and dS asymptotics, violations of cosmic censorship -- generic or otherwise -- are problematic in any setting where classical gravity is expected to be valid.

It would be particularly unfortunate\footnote{Or fortunate, from a certain point of view.} if such violations were to indicate that trapped surfaces can lie outside of event horizons. In its current formulation, cosmic censorship forbids the existence of trapped surfaces -- i.e. surfaces from which light rays converge in any direction due to gravitational lensing -- outside of event horizons~\cite{HawEll}. A large set of theorems in classical gravity relies on this result (see~\cite{Wald, HawEll}). Numerical relativity typically uses the detection of (marginally) trapped surfaces as an avatar for the event horizon, whose location (and existence) can only be determined in infinite time.  Must we face the possibility that these results are all questionable?

Possibly not, at least in spacetimes that arise as classical limits of quantum gravity.  The recent discoveries of violations of cosmic censorship in AdS$_{4}$ \cite{HorSan16,CriSan16, CriHor17, CriHor18, HorSan19} have also been found to violate the Weak Gravity Conjecture~\cite{ArkMot06}; cosmic censorship is restored precisely when the theory is adjusted so as to satisfy this conjecture, which is hypothesized to discriminate between spacetimes that do and do not admit valid UV completions. The confluence of validity of the cosmic censorship and the weak gravity conjecture has given rise to speculation that while classical General Relativity admits violations of cosmic censorship, the classical spacetimes that result from a truncation of a valid quantum theory of gravity do not~\cite{CriHor17,HorSan19}: that is, that quantum gravity enforces cosmic censorship on its strict classical limit. 

Here we focus on trapped surfaces rather than the statement of cosmic censorship in terms of initial data. The
restriction of trapped surfaces to lie behind horizons is one of the most valuable consequences of cosmic censorship, since as noted above,
it is a sine qua non for a number of results in General Relativity~\cite{Wald}. It is furthermore one of relatively few
consequences of cosmic censorship that can be formulated in terms of the experience of a family of observers: a trapped
surface outside of a horizon can in principle be detected by an asymptotic family of observers in finite time
(or retarded time, in the asymptotically flat case). In fact, one could even go so far as to argue that the absence of trapped surfaces
outside of horizons is a large part of the physical content of the weak cosmic censorship conjecture; namely that regions of strong gravity (usually heralded by trapped surfaces) are 
hidden from asymptotic observers. This statement also fortunately avoids any references to singularities, which are
notoriously hard to work with -- see~\cite{Tip77,Cla84, Dra84,Nol99,Ori00} and references therein for literature in
General Relativity attempting classify strengths and types of singularities\footnote{For instance, as discussed in~\cite{Emp20}, a singularity like the Gregory-Laflamme instability would by any nice definition be considered ``weak'' enough to be allowed by cosmic censorship. We thank R. Emparan for discussions on this topic.}.  

We aim to test the hypothesis that quantum gravity forces trapped surfaces behind horizons: we use holography as a laboratory for the classical limit of quantum gravity and ask whether some principle of the AdS/CFT correspondence implies that trapped surfaces remain cloaked from the asymptotic boundary. 

Concrete evidence in favor of such a conclusion was found
recently in e.g.~\cite{EngHor19}, which used the holographic dictionary to prove the Penrose Inequality in
AdS~\cite{Pen73,ItkOz11,HusSin17}, a
key result implied by the combination of two oft-quoted but unproven conjectures: (1) that trapped surfaces lie behind horizons, and (2) that black
holes equilibrate. The proof of~\cite{EngHor19} assumed neither (1) nor (2) but instead made use of the holographic
entanglement entropy proposal of Ryu-Takayanagi~\cite{RyuTak06} and Hubeny-Rangamani-Takayanagi~\cite{HubRan07} (HRT)
\begin{equation}\label{eq:HRT}
S_{\mathrm{vN}}[\rho_{R}]=\frac{\mathrm{Area}[X_{R}]}{4G\hbar},
\end{equation}
where $\rho_{R}$ is the density matrix of the CFT state reduced to the region $R$ and $X_{R}$ is the minimal area stationary surface homologous to $R$. 

Because the Penrose Inequality follows from the absence of trapped surfaces outside of horizons together with black
hole equilibration, it is a good omen in
favor of cosmic censorship; however, it falls well short of proving that trapped surfaces in fact must lie behind horizons\footnote{And falls even shorter of proving cosmic censorship.}.

In this article, we close this gap, thus proving a central consequence of cosmic censorship: the holographic dictionary implies that
trapped surfaces lie behind 
event horizons. Our primary assumptions are (1) the HRT prescription,  (2) that there exist unitary operators on the boundary whose effect in the
bulk propagates causally, and (3) that singularities do not evaporate (a criterion that will be defined more rigorously in the following section) in classical gravity without violations of the null energy condition. Steps (1) and (2) together imply causal wedge inclusion~\cite{Wal12, HeaHub14}: that the causal wedge must lie inside of the entanglement wedge. This holographic ingredient is a key step in our proof.

Before proceeding to the outline of the proof, let us briefly comment on quantum corrections. When quantum gravity effects are taken into account and violations of the null energy
condition ($T_{ab}k^{a}k^{b}\geq 0$ for null $k^{a}$) are permitted, there is of course no expectation that the above formulation of cosmic censorship should remain valid; it is, after all, a prediction about the behavior of classical general relativity. Indeed, evaporating black holes do  have trapped
surfaces outside of horizons (see e.g.~\cite{BouEng15c}). However, cosmic censorship is a statement about the classical theory; we are primarily concerned with ascertaining whether the classical limit of quantum gravity features trapped surfaces outside of event horizons. Within semiclassical gravity, however, these results may be interpreted as statements about early stages of gravitational collapse.

\paragraph{Outline of the Proof:}  the proof has three ingredients: first, the HRT prescription~\eqref{eq:HRT} combined
with the fact that turning on local unitary operators in the dual theory at $\mathscr{I}$ results in causally propagating bulk perturbations; second, a
theorem of~\cite{AndMet07} proving that an apparent horizon must lie between trapped surfaces and ``normal'' surfaces --
surfaces in which ingoing light rays converge and outgoing light rays expand; and third,
the holographic description of apparent horizons~\cite{EngWal17b, EngWal18} (which is not an additional ingredient but rather a
construction that relies only on HRT).

First, we use the fact that $S_{\mathrm{vN}}[\rho]$ is invariant under local unitary operations on the boundary state $\rho$ to argue
that the HRT surface $X$ of a connected component of $\mathscr{I}$ cannot be timelike separated to any portion of
$\mathscr{I}$. This is a well-established requirement often referred to as {causal wedge inclusion~\cite{Wal12, HeaHub14}: if $X$ were timelike-separated to $\mathscr{I}$, then it would be
possible for a local unitary CFT operator acting to create a bulk signal propagating causally to $X$, which could modify its area; this is
illustrated in Fig.~\ref{fig:usignal}. Thus we would find that if $X$ were timelike to any $p\in\mathscr{I}$ it would be possible to modify $S_{\mathrm{vN}}[\rho]$ via local unitaries acting on $\rho$.\footnote{
Causal wedge inclusion can also be shown to follow from entanglement wedge nesting \cite{AkeKoe16}, which states that if
the size of a CFT subregion increases from $R_1$ to $R_2 \supset R_1 $, the part of the bulk that could be reconstructed
from $R_1$ can also be reconstructed from $R_2$.}

\begin{figure}
\centering
\includegraphics[width=0.4\textwidth]{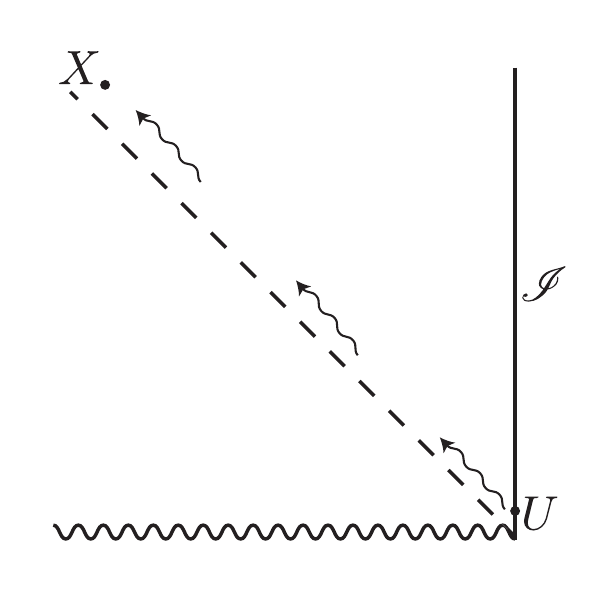}
\caption{Example showing how causal connection between $\mathscr{I}$ and its HRT surface $X$ can be used to change
$S_{\mathrm{vN}}$ with a local unitary.}
\label{fig:usignal}
\end{figure}

Next, the theorem of~\cite{AndMet07} guarantees that a certain type of apparent horizon always lies between normal and
trapped surfaces. We show that, given a trapped surface in our setup, there are always normal surfaces outside of it,
and that the type of apparent horizon guaranteed by the theorem satisfies the refinement necessary for step three.

Finally, we use the holographic construction of a dual to this type of apparent horizon~\cite{EngWal17b, EngWal18}. The construction
instructs us to fix the spacetime and matter outside of the apparent horizon -- its so-called outer wedge -- and modify
the spacetime elsewhere via a specific prescription. If naked
singularities and Cauchy horizons develop as a result of this prescription, we permit any extension beyond the Cauchy
horizon consistent with our relatively mild assumptions about General Relativity (here we operate under the assumption that the boundary conditions are inherited by a top-down UV completion). In this newly constructed spacetime,
the HRT surface $X$ is null-separated from the apparent horizon as illustrated in Fig.~\ref{fig:Xandmu}.
Since by step one there are no timelike curves from the HRT surface to $\mathscr{I}$, there can be no future-directed timelike curves
from the apparent horizon to $\mathscr{I}$: we find that apparent horizons must lie outside of $I^{-}[\mathscr{I}]$ in
the coarse-grained spacetime; under the assumption that singularities do not evaporate in classical spacetimes
satisfying the null energy condition, we can then deduce that apparent horizons must therefore also lie behind the event
horizon in the original spacetime.\footnote{Since, to our knowledge, known solutions with evaporating singularities -- a concept that we
will make precise in Section~\ref{sec:StepOne} -- arising from the evolution of initial data feature violations of the null energy condition, we do not find this assumption to be particularly prohibitive. More generally, violations of weak cosmic censorship are normally concerned with the formation of singularities rather than their demise. This will be discussed at greater length in Section~\ref{sec:disc} .} This immediately shows that naked singularities -- should they exist in holographic spacetimes -- cannot incur trapped surfaces outside of horizons. The contrapositive then yields the following statement: if a given spacetime has a trapped surface outside of (or without a) horizon, it cannot be holographic; should we adopt the perspective that AdS$_{D\geq 3}$ spacetimes with a well-defined UV completion are always holographic, this then becomes a potential swampland condition on the set of spacetimes with valid UV completions.

\begin{figure}
    \centering
    \subfigure[]{\includegraphics[height=5cm]{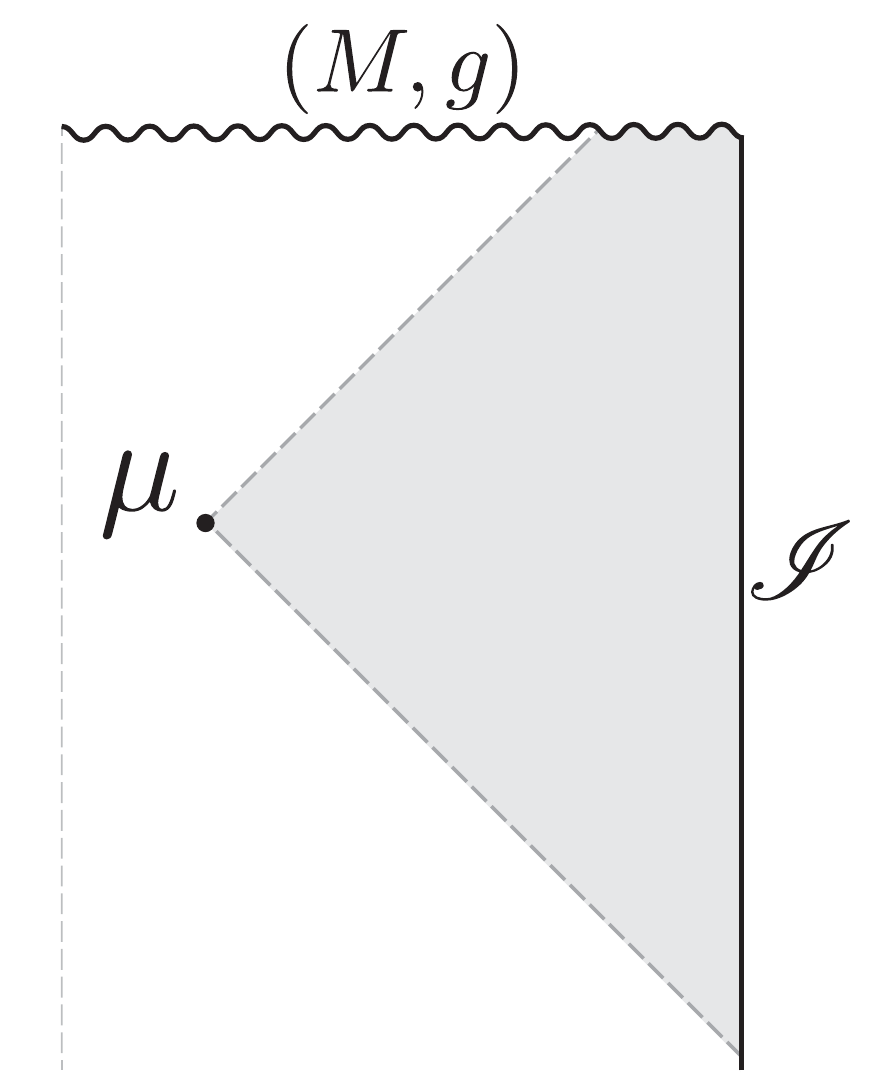} \label{xmurel1}}
    \subfigure[]{\includegraphics[height=5cm]{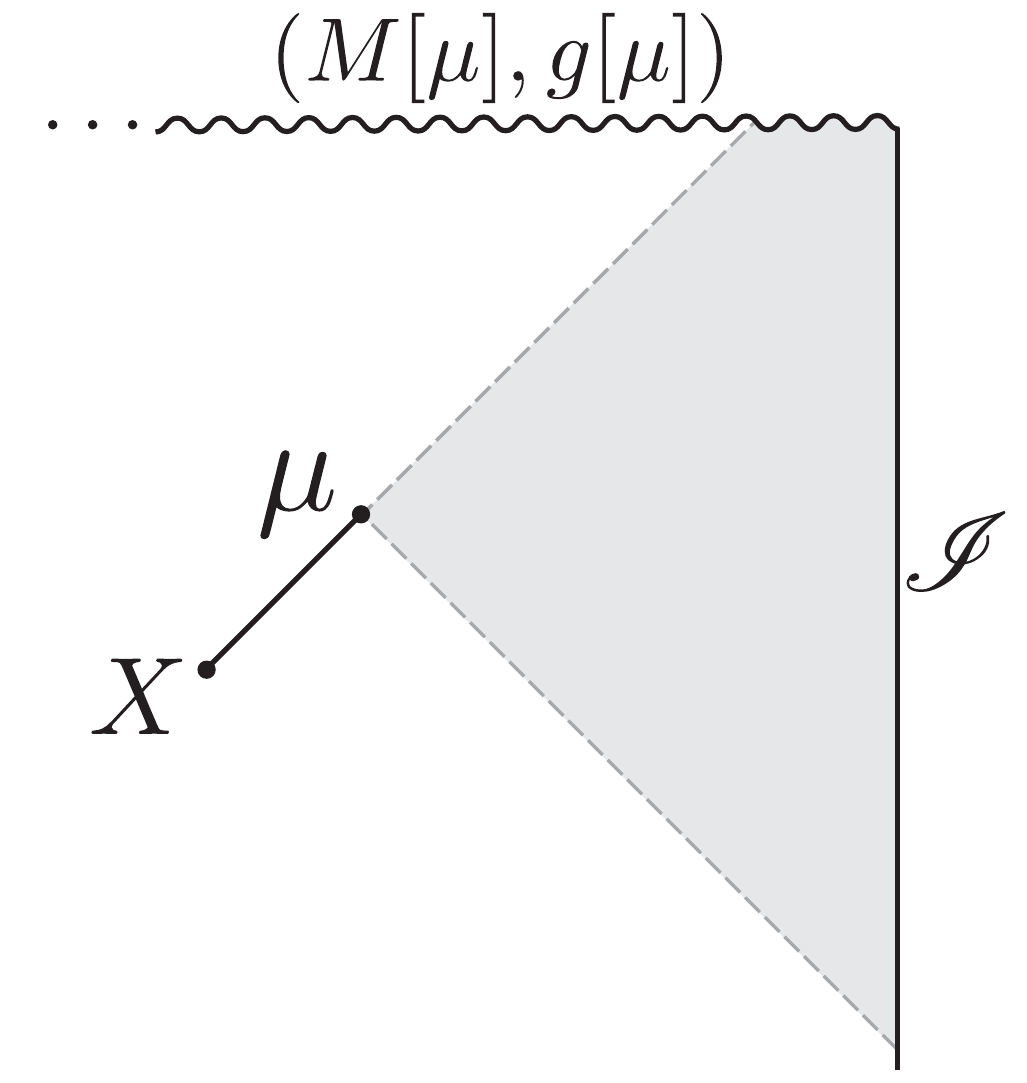} \label{xmurel2}}
    \caption{\subref{xmurel1} An apparent horizon $\mu$ (more precisely, a minimar surface) in a spacetime $(M, g)$. \subref{xmurel2} The same apparent horizon in its corresponding
    coarse-grained spacetime $(M[\mu], g[\mu])$, where there HRT surface $X$ with respect to $\mathscr{I}$ is null
    related to $\mu$. The shaded wedge outside $\mu$ is common to both spacetimes.}
\label{fig:Xandmu}
\end{figure}

\paragraph{Relation to Prior Work:} As alluded to earlier, previous attempts to prove various implications of the weak cosmic censorship
conjecture have frequently encountered complications related to classifying singularities~\cite{Tip77,Cla84, Dra84,Nol99,Ori00}. By considering trapped surfaces directly, and by defining evaporating singularities in terms of
homology hypersurfaces (see Definition~\ref{def:evapsing}), we are able to avoid this complication altogether. Additionally, since cosmic censorship is known to be
violated on a measure-zero set of the space of solutions to the Einstein equations \cite{Cho92,Chr94,Ham95}, a genericity assumption is
necessary for there to be any chance of the usual statement weak cosmic censorship to be true. The first half of our proof requires no assumptions about any genericity condition, and proves definitively
that spacetimes with marginally trapped surfaces outside of horizons are not holographic. The second part of the proof,
that all trapped surfaces are also behind horizons does assume a mild genericity condition and a technical assumption. We
expect that these assumptions likely can be relaxed. 

\subsection{Assumptions and  Conventions}
\begin{figure}
\centering
\includegraphics[width=0.4\textwidth]{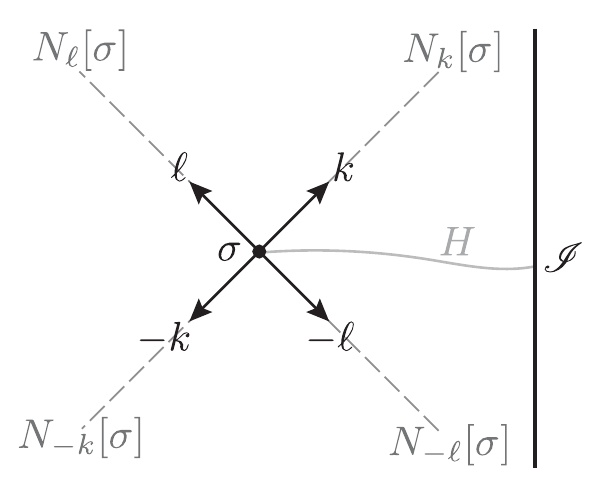}
\caption{An illustration of the four null normals of a surface $\sigma$ (here shown to be homologous to $\mathscr{I}$) together with the corresponding null congruences. }
\label{fig:tangents}
\end{figure}

\paragraph{Assumptions:} We will assume the null convergence condition, $R_{ab}k^{a}k^{b}\geq 0$ for all null vectors
$k^{a}$ as well as the HRT prescription for computing $S_{\mathrm{vN}}$~\eqref{eq:HRT} and entanglement wedge reconstruction. In particular, this means that we work strictly within classical General Relativity. We further assume that our spacetime
is time-orientable and has no closed timelike curves. We further assume that in a maximal conformal
extension $(\partial M, h)$ of $\mathscr{I}$, $\mathscr{I}$ is both globally hyperbolic and geodesically complete;
unless explicitly noted otherwise, by $\mathscr{I}$ we will always mean a connected component of $\partial M$. By
contrast, $\partial M$ will always denote the complete conformal boundary. We assume that $\mathscr{I}$ is spatially compact. 

\paragraph{Conventions:} 
In describing domains of dependence that include the relevant
portion of the asymptotic boundary, we will include $\mathscr{I}$ in any Cauchy slices of these domains of dependence,
per standard conventions (see~\cite{EngWal14} for a definition).  

\begin{enumerate} 
    \item Bulk: we assume that there are no closed causal curves and
        that $J^+(p)\cap J^-(q)$ is compact in the conformal completion for each pair of points $q, p$. This is typically referred to as the AdS version of global hyperbolicity; in a slight abuse of notation, we shall simply refer to it as global hyperbolicity. We refer to the past Cauchy horizon of $D$ as $\partial^{-}D$ and the future Cauchy horizon as $\partial^{+}D$. 
For a set $A \subset M$, the boundary of $A$ is denoted $\partial A$. We take $\hat{A}$ to be the closure of $A$ in the
conformal completion $M\cup \partial M$. If an intersection between a subset of $M$ and $\mathscr{I}$ is taken, it is
always implicitly assumed that the closure in the conformal completion is taken first. 
Examples of the sets described above are shown in Fig.~\ref{fig:boundarydefs}.  

By a surface, we will always mean a spacelike (achronal) codimension-two embedded submanifold (without boundary). A surface $\sigma$ is said to be homologous to $\mathscr{I}$ if there is an achronal \textit{homology}
        hypersurface $H$ between $\sigma$ and $C$; i.e. if there is an achronal hypersurface $H$ such
        that (1) $\partial H=\sigma \cup C$ where $C$ is a Cauchy slice of $\mathscr{I}$ and (2) $H$ is compact in the
        conformal completion of $(M,g)$. Any surface $\sigma$ (homologous to $\mathscr{I}$ or otherwise) has two linearly independent normals, which we may pick to be null. We will denote
        the future-directed null normals $k^{a}$ and $\ell^{a}$; if $\sigma$ is additionally homologous to $\mathscr{I}$, then at
        least one of these null normals points towards $\mathscr{I}$ -- i.e. is ``outwards-pointing''; we will name that vector $k^{a}$.\footnote{If $\partial M$ consists of multiple connected components $\mathscr{I}$ and $\sigma$ is homologous to multiple components, it is possible for both $k^{a}$ and $\ell^{a}$ to be outwards-pointing, and it will be either clear from context or immaterial which one is which.} The four null geodesic congruences obtained by firing null geodesics along $\pm k^a$
        and $\pm \ell^a$ from $\sigma$ and terminating at conjugate points and geodesic intersections are denoted $N_{\pm k}[\sigma]$ and $N_{\pm \ell}[\sigma]$. See
        Fig.~\ref{fig:tangents} for an illustration of these concepts. 
        We will generally be interested in the expansion of a null congruence in the $n^{a}\in \{\ell^{a},k^{a}\}$ direction, defined as 
        \begin{equation}
        \theta_{n}=n^{a}\nabla_{a} \left (\ln\mathrm{Area}[\sigma] \right).
        \end{equation}
        A marginally trapped surface satisfies $\theta_{k}=0$ and $\theta_{\ell}<0$ (whereas
if $\theta_{\ell}$ is unconstrained it is simply marginally \textit{outer} trapped). A trapped surface has negative expansion
along both future expansions.

Finally, as described in the introduction, we will work with what we shall refer to \textit{permissible extensions} of a
Cauchy horizon as an extension of a spacetime beyond a Cauchy horizon that is consistent with General
        Relativity and all our global assumptions
listed above, together with one more condition:  we assume that we never extend spacetimes beyond the ``holographic region'', should
singularities terminating CFT evolution arise (see discussion below). All other conventions are as in~\cite{Wald}. 

\item Boundary: We use the letter $C$ to refer to Cauchy slices of any maximal conformal extension
    of $\mathscr{I}$, and by $i^+$ we mean future timelike infinity of $M$.\footnote{Even though the commonly drawn conformal diagram of AdS does not show $i^{+}$, it nonetheless exists!}.        
        Since the utility of our proof is in the provision of a UV complete description of the
        gravitating system via the CFT, we are also concerned with the evolution of the boundary
        theory. That is, if at any point the CFT (in the large-$N$ limit) becomes sick (e.g. if the
        stress tensor becomes divergent~\cite{CheWay19}, Hamiltonian becomes unbounded from below,
        etc.~\cite{HerHor05}) in finite boundary time in a maximal conformal extension, we must conclude
        that this similarly puts an end to bulk evolution. Thus if the  CFT evolution is
        well-defined (for $N\rightarrow \infty$) only between two (potentially empty) boundary time
slices $C^-$ and $C^+$ of the maximal conformal extension $(\partial M, h)$, $C^+$ being to the future of $C^-$, then we excise $J^{+}[C^+]$ and
$J^{-}[C^-]$ from $(M, g)$. This is because these regions are not encoded in the CFT state between $C^-$ and $C^+$, and
so the spacetime without the excision is not completely encoded in the CFT and thus not holographic.\footnote{An example
    of a scenario like this is Choptuik critical collapse in AdS
\cite{CheWay19}, where the CFT stress tensor becomes singular when the Cauchy horizon caused by the naked singularity
reaches $\mathscr{I}$. Any potential extension of the spacetime to the future of this time is not holographically encoded in the CFT.}     To facilitate terminology here, we define $\widetilde{i^+}$ as the futuremost endpoint of the CFT evolution. If the CFT evolution exists for all time in the maximal conformal extension, then $i^{+}=\widetilde{i^{+}}$; otherwise, if the CFT evolution ends prematurely at some finite time, $\widetilde{i^{+}}$ becomes the effective timelike infinity: any signals that travel from the bulk to the boundary and arrive in the future of $\widetilde{i^{+}}$ are non-holographic. 
\end{enumerate}

\begin{figure}
\centering
\includegraphics[width=0.6\textwidth]{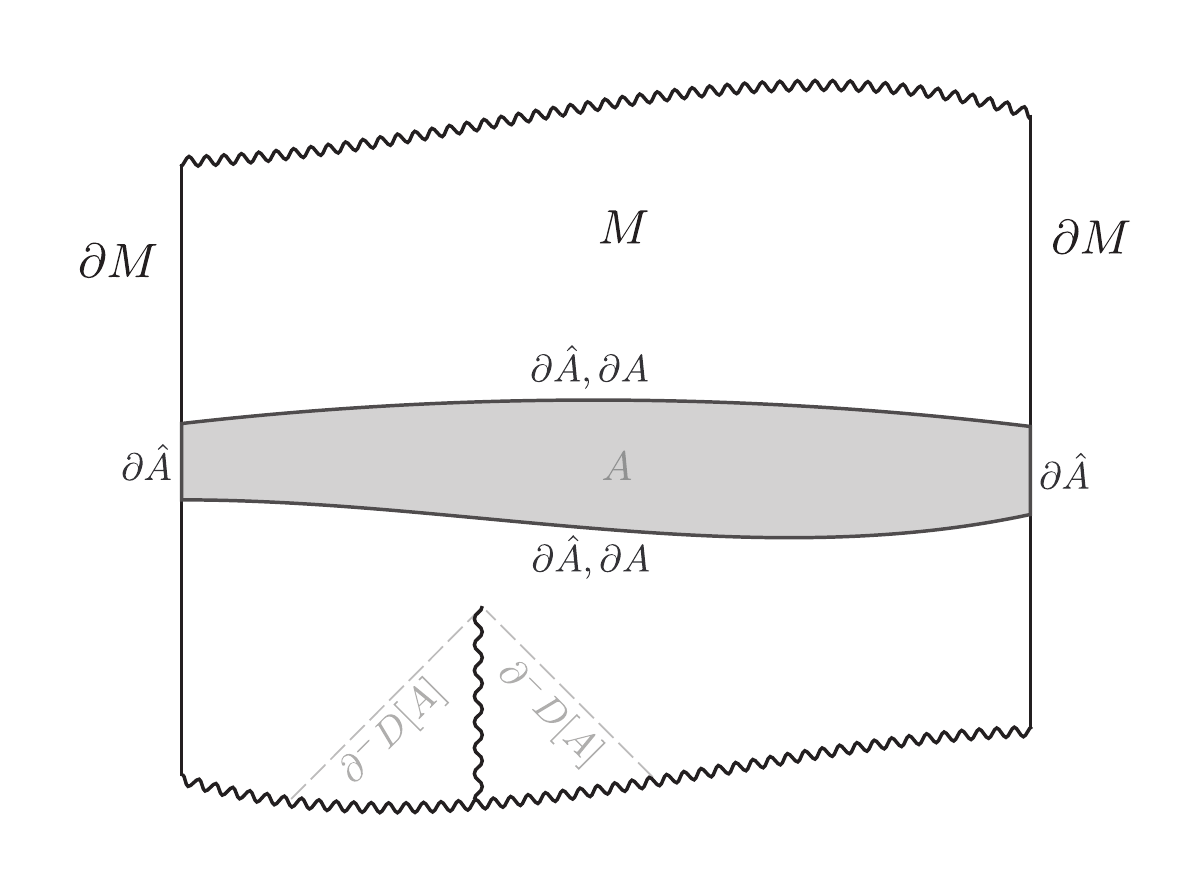}
\caption{An example of an asymptotically AdS spacetime $M$ with conformal boundary $\partial M$, and with set $A\subset M$
marked. The boundaries of $A$, $\partial A$ in $M$ and $\partial \hat{A}$ in $M\cup \partial M$, are highlighted.
Furthermore, the past Cauchy horizon of $D[A] = D[\partial A]$ is illustrated.
}
\label{fig:boundarydefs}
\end{figure}

\section{Apparent Horizons in Holography}

As discussed in Section~\ref{sec:intro}, an apparent horizon is intuitively the boundary to the very strong gravity
region on a given spatial slice; more precisely, it is the boundary between normal and trapped surfaces on a spatial
slice. In this article we will be concerned with a slight refinement of apparent horizons called  ``minimar
surfaces''~\cite{EngWal17b}.

A compact, connected surface $\mu$ is called a \textit{future minimar} surface if it satisfies the following:
\begin{enumerate}
    \item $\mu$ is marginally trapped;
    \item $\mu$ is homologous to $\mathscr{I}$;
    \item $\mu$ is \textit{strictly stable}, meaning there exists choice of $k^a$ and $\ell^a$ such that
        $k^{a}\nabla_{a}\theta_{\ell}<0$ everywhere on $\mu$ (this effectively means that there exist untrapped surfaces outside of $\mu$ and trapped surfaces inside $\mu$);
    \item $\mu$ is the surface of least area on at least one of its homology hypersurfaces.
\end{enumerate}
In analogy with the entanglement wedge, the
``outer wedge '' of $\mu$, denoted $O_{W}[\mu]$, is the domain of dependence of $H$. The absence of global hyperbolicity
raises a potential concern that the outer wedge is dependent on the choice of homology hypersurface. In the subsequent
section, we will prove that this concern is unfounded in the absence of evaporating singularities. Finally, note that it is
possible to define a past analogue of a minimar by replacing $\ell \rightarrow -k$, $k \rightarrow -\ell$. All of our results
will be valid for this time-reversed choice, but for brevity we will focus on future case. 
Consequently we often refer to future minimars just as minimars.

Our results rely heavily on the primary construction of~the dual to the apparent horizon, which builds a spacetime in
which the area of the HRT surface is identical to the area of $\mu$, and which agrees with $(M,g)$ on $O_{W}[\mu]$. The
construction is illustrated and described in Fig.~\ref{fig:coarsegrain}. 
Because the spacetime in question is generated by discarding the region that was originally behind
$\mu$, it is often referred to as the coarse-grained spacetime, denoted $(M[\mu], g[\mu])$. Here we include in
$(M[\mu],g[\mu])$ both the Cauchy development of the homology slice $H$ and any permissible extension of the Cauchy horizon. 
This means that $(M[\mu],g[\mu])$ is not necessarily globally hyperbolic, and so some additional work, carried out in
the proof of Lemma~\ref{lem:H1H2} and Theorem~\ref{thm:muhide}, is needed to show that the extremal surface $X$ on
$N_{-k}[\mu]$ about which we CPT reflect still is the HRT surface of $\mathscr{I}$ in $(M[\mu],g[\mu])$.

\begin{figure}
\centering
\includegraphics[width=1.0\textwidth]{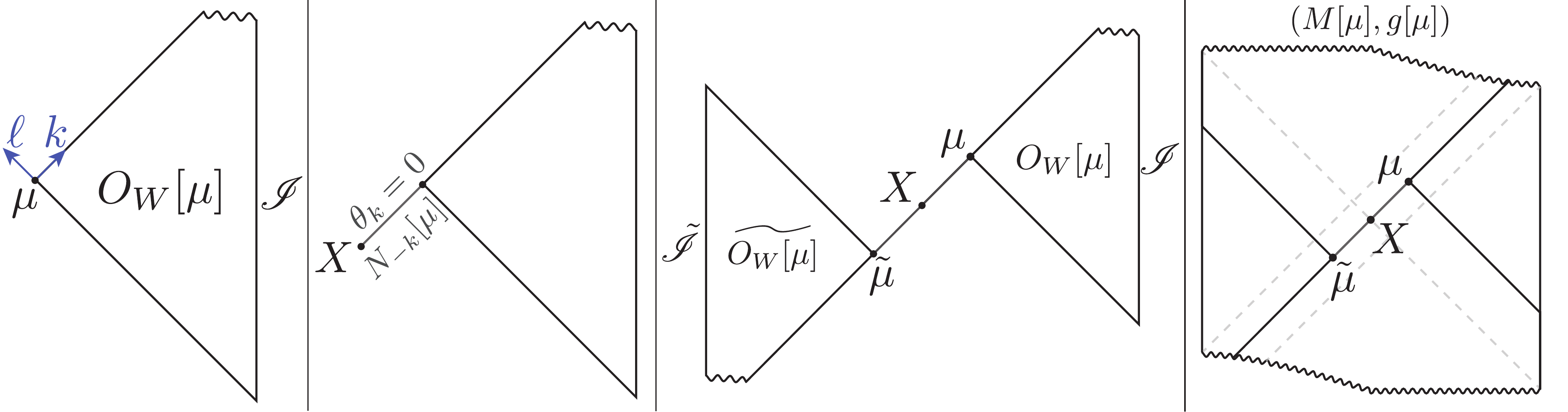}
\caption{The coarse-grained spacetime construction of~\cite{EngWal17b, EngWal18}. The first panel shows the outer wedge of a minimar $\mu$. The coarse grained spacetime elsewhere is
specified by a piecewise null stationary initial data hypersurface $N_{-k}[\mu]$ emanating in the $-k^{a}$ direction
from $\mu$. It is possible to prescribe initial data on $N_{-k}[\mu]$ so that there is an extremal $X$ surface on it. At this extremal surface, the spacetime is CPT reflected to generate a complete Cauchy slice of $(M[\mu],g[\mu])$. }
\label{fig:coarsegrain}
\end{figure}

For our purposes, the takeaway from this construction is that for any minimar surface, there
exists a spacetime $(M[\mu], g[\mu])$ in which $\mu$ is null-separated along $-k^{a}$ from the HRT surface $X$ of
$\mathscr{I}$. Furthermore, since the construction of $(M[\mu], g[\mu])$ makes no assumptions about global
hyperbolicity~\cite{EngWal18}, we are free to apply it in the context under consideration in this article. 

\section{Apparent Horizons Lie Behind Event Horizons\label{sec:StepOne}} 

In this section, we argue that minimar surfaces must lie behind event horizons. The proof hinges on the
well-known requirement discussed in Section~\ref{sec:intro} that for consistency of the proposal~\eqref{eq:HRT}, the HRT surface of $\mathscr{I}$ must be spacelike- or
null-separated to $\mathscr{I}$.\footnote{Readers concerned about recent results involving quantum extremal
surfaces~\cite{EngWal14} outside of horizons~\cite{AlmMah19} should defer their concern to Section~\ref{sec:disc}, where we discuss quantum corrections and non-standard boundary conditions. } In modifying the area of the HRT surface, we would by the holographic prescription also modify $S_{\mathrm{vN}}[\rho_{\mathrm{bdy}}]$. Since the latter is invariant under local unitary operations, we immediately arrive at a contradiction; thus HRT surfaces must lie behind event horizons~\cite{HeaHub14}. 

From this requirement of  causal wedge inclusion, we will show that minimar surfaces must also lie behind an event horizon in the original spacetime. To prove this second step, it is critical that strictly classical GR satisfying the null convergence condition admit no evaporating singularities. Let us now make this requirement precise:

\begin{defn}\label{def:evapsing} An asymptotically AdS spacetime $(M,g)$  is said to be devoid of evaporating singularities if
 for every closed set $K \subset M$, when $\partial \hat{K}$ is a compact hypersurface in the conformal completion then
    $\hat{K}$ is compact in the conformal completion. 
\end{defn}
This property, by Lemma 8.2.1 of~\cite{Wald}, ensures that no inextendible curves are imprisoned in
$\hat{K}$, so that as we follow an inextendible geodesic in $K$ they either leave $K$ through $\partial K$ or go to the conformal boundary. 
In particular, this rules out geodesic incompleteness between complete hypersurfaces that do not touch singularities. 
In the special case where $\partial K$ is the union of two achronal surfaces, then compactness of $\hat{K}$
implies global hyperbolicity of $\hat{K}$.

As explained in the introduction, we will assume that all \textit{strictly classical} solutions to GR that admit a UV completion in quantum gravity are devoid of evaporating singularities. 
We emphasize that this is a far weaker assumption than strong asymptotic predictability~\cite{Wald}, since spacetime can
violate global hyperbolicity arbitrarily badly near the asymptotic region owing to non-evaporating singularities.  Our
assumption further means that no permissible extension of the Cauchy horizon results in an evaporating singularity in classical GR. See Fig.~\ref{fig:nonevap} for some examples.

\begin{figure}
\centering
\includegraphics[width=1.0\textwidth]{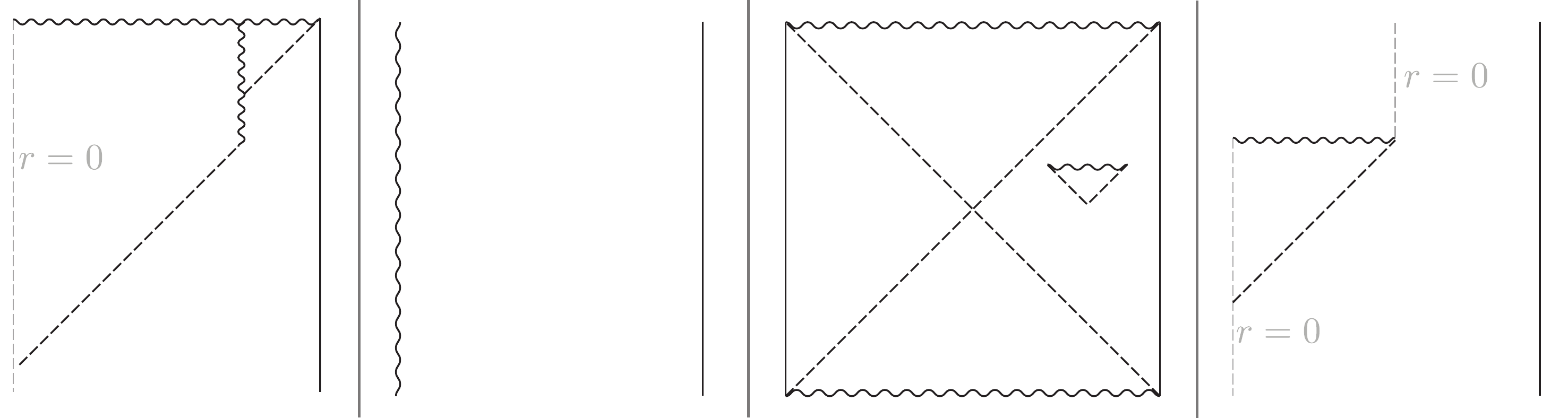}
\caption{Sketched conformal diagrams of four spacetimes. The two left-most spacetimes are neither globally hyperbolic nor strongly asymptotically predictable, but
    they are devoid of evaporating singularities. The two right-most spacetimes contain evaporating singularities.}
\label{fig:nonevap}
\end{figure}

From this, we prove that the choice of homology hypersurface is immaterial to the definition of $O_{W}[\mu]$:

\begin{lem} 
    \label{lem:H1H2}
    Let $(M,g)$ be devoid of evaporating singularities. If $H_{1}$ and $H_{2}$ are two homology hypersurfaces of a surface $\mu$ homologous to $\mathscr{I}$, then $D[H_{1}]=D[H_{2}]$. 
\end{lem}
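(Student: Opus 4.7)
The plan is to use the no-evaporating-singularities hypothesis to show that the region sandwiched between $H_1$ and $H_2$ has compact closure in the conformal completion, from which the equality $D[H_1]=D[H_2]$ follows from a standard no-imprisonment argument. By symmetry it suffices to show $D[H_2]\subseteq D[H_1]$, i.e.\ that every inextendible causal curve through a point $q\in D[H_2]$ meets $H_1$.

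First I would form the closed region $K\subset M$ bounded by $H_1$ and $H_2$, namely the union of the closures of $J^+(H_1)\cap J^-(H_2)$ and $J^+(H_2)\cap J^-(H_1)$. Its conformal closure $\hat K$ has boundary $\partial\hat K = \hat H_1\cup\hat H_2\cup S$, where $S$ is the slab of $\mathscr{I}$ between the two Cauchy cuts $C_i = \partial\hat H_i\cap\mathscr{I}$, and where $\hat H_1$ and $\hat H_2$ share the crease $\mu$. Each piece is compact: $\hat H_i$ by the definition of homology hypersurface, and $S$ because $\mathscr{I}$ is spatially compact and we are taking a temporally bounded interval. Hence $\partial\hat K$ is a compact hypersurface in $(M\cup\partial M,\hat g)$. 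Definition~\ref{def:evapsing} then gives compactness of $\hat K$, and Lemma~8.2.1 of~\cite{Wald} implies that no inextendible causal curve can be imprisoned in $K$.

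Second, I would use this no-imprisonment property to deduce that every inextendible causal curve $\gamma$ through a point $q\in D[H_2]$ meets $H_1$. Since $\gamma$ meets $H_2$ at some $r$, at least one semi-infinite piece of $\gamma$ emanating from $r$ enters the interior of $K$. That piece must exit $\hat K$ through $\partial\hat K$. Achronality of $H_2$ precludes re-crossing $H_2$, so the exit must occur through $H_1$ or through the asymptotic slab $S$. Either option gives the conclusion once one rules out exits through $S$ that miss $H_1$, completing the argument; by symmetry of the roles of $H_1$ and $H_2$, this yields $D[H_1]=D[H_2]$.

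The main obstacle is the last step — excluding exits of $\gamma$ through the strip $S\subset\mathscr{I}$ without a prior crossing of $H_1$. This is essentially a homological statement: $H_1$ and $H_2$, together with $S$ and the crease $\mu$, cobound $\hat K$, so a continuous causal curve traversing the interior of $K$ must cross $H_1$ before escaping through $S$, provided it does not asymptote to $\mu$. Implementing this rigorously requires care at two places: near the codimension-two crease $\mu$, where $H_1$ and $H_2$ join and where curves might in principle accumulate; and near $S\subset\mathscr{I}$, where one must use that $C_1$ and $C_2$ are Cauchy slices of the globally hyperbolic $\mathscr{I}$ to control how $\gamma$ approaches the conformal boundary. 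I expect achronality of the $H_i$ together with the fact that $\mathscr{I}$ is included in our Cauchy-slice conventions (so causal curves reaching $\mathscr{I}$ are counted as terminating on any homology hypersurface they must cross) to handle both issues cleanly.
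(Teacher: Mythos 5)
Your argument is essentially the paper's own proof: form the region $K$ between $H_1$ and $H_2$ whose conformal closure is compact by the no-evaporating-singularities hypothesis (its boundary being $H_1\cup H_2$ together with the compact strip of $\mathscr{I}$ between $C_1$ and $C_2$), invoke the no-imprisonment lemma, and use achronality to force any inextendible causal curve entering $K$ to exit through the other hypersurface or through $\mathscr{I}$, with the intersecting case handled by applying the same argument to each compact component. The one point you flag as the ``main obstacle'' --- excluding exits through the boundary strip $S$ --- is not actually an obstacle, for precisely the reason you tentatively offer at the end: since $C_1$ and $C_2$ are Cauchy slices of the globally hyperbolic, spatially compact $\mathscr{I}$, a causal curve that escapes to the conformal boundary anywhere is automatically accounted for in $D[H_1]$ under the AdS convention that includes $\mathscr{I}$ in the Cauchy slices of these domains of dependence, so the paper simply allows such exits rather than ruling them out.
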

\begin{proof}
   To prove this, it is sufficient to show that every inextendible timelike curve intersecting $H_1$ also either intersects $H_2$ or
    reaches the conformal boundary (since $H_{1}$ includes a slice of $\mathscr{I}$). Assume first that $H_{1} \cap H_{2}=\varnothing$, and let $C_1$ and $C_2$ be the
    Cauchy slices of $\mathscr{I}$ where they are anchored. By global hyperbolicity and spatial compactness of
    $\mathscr{I}$ there is a compact subset $ I \subset \mathscr{I}$ with $\partial I = C_1
    \cup C_2$. Then $\partial \hat{K}\equiv I \cup H_1 \cup H_2$ is a compact hypersurface in the conformal completion (see
    Fig.~\ref{fig:H1H2} for an illustration), and the region 
    $\hat{K}$ bounded by $\partial \hat{K}$ in the conformal completion is also compact. 

    Assume an inextendible timelike curve $\gamma$ enters $K$ through $H_1$, and assume without loss of generality
    that it enters $K$ to the future. 
    By compactness of $\hat{K}$, $\gamma$ must either reach the conformal boundary or leave $K$ to its future. 
    It cannot leave $K$ through $H_1$ by global hyperbolicity of $K$. 
    Thus if $\gamma$ does not go to the conformal boundary, it intersects $H_2$. 

    In the case where $H_1$ and $H_2$ intersect we potentially get multiple compact regions $\hat{K}_{1}, \hat{K}_{2}, \ldots$ bounded by 
    $H_1 \cup H_2 \cup I$, as shown in Fig.~\ref{fig:H1H2}. Since the above argument will apply to each region we again find that $\gamma$ intersects $H_2$
    or reaches $I$.\footnote{Here we are ignoring potential issues that could arise if the intersection is a dense measure zero set; we assume that the domain of dependence of any one of these hypersurfaces is codimension-zero, and thus that we can ``wiggle'' the hypersurface to avoid that scenario.}
\end{proof}

\begin{figure}
\centering
\includegraphics[width=0.6\textwidth]{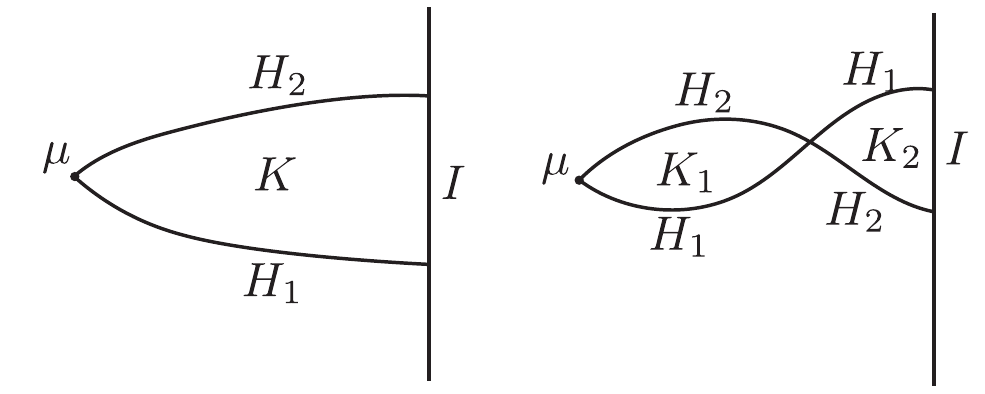}
\caption{Illustration of two homology slices, which by the lack of evaporating singularities, must have
the same outer wedge.}
\label{fig:H1H2}
\end{figure}
For the remaining of the paper, unless stated otherwise we will make our above assumption that classical GR allows no evaporating singularities that satisfy the null convergence condition.

\begin{thm} 
    \label{thm:muhide}
    Let $\mu$ be a future minimar surface in $(M,g)$. Then $\mu$ lies behind the future event horizon in $(M,g)$.
\end{thm}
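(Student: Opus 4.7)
The plan is to argue by contradiction and transfer the hypothetical visibility of $\mu$ from $(M,g)$ to the coarse-grained spacetime $(M[\mu], g[\mu])$, where causal wedge inclusion will supply the contradiction. Concretely, suppose some $p\in\mu$ satisfies $p\in I^-[\mathscr{I},M]$, so that a future-directed timelike curve $\gamma$ joins $p$ to a point $q\in\mathscr{I}$ in $(M,g)$. The goal is to show that $\gamma$ can be transported to $(M[\mu],g[\mu])$, where it would place the extremal surface $X$ in $I^-[\mathscr{I},M[\mu]]$, contradicting the fact that $X$ is the HRT surface there.

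First, apply the coarse-graining construction to $\mu$, producing $(M[\mu],g[\mu])$ together with the extremal surface $X\subset N_{-k}[\mu]$ null-related to $\mu$. A subsidiary task is to verify that $X$ really is the HRT surface of $\mathscr{I}$ in $(M[\mu],g[\mu])$, which is nontrivial because $(M[\mu],g[\mu])$ need not be globally hyperbolic; this should follow from the well-definedness of outer wedges provided by Lemma~\ref{lem:H1H2}, together with the area-matching $\mathrm{Area}[X]=\mathrm{Area}[\mu]$ that is built into the construction. Once established, causal wedge inclusion---invariance of $S_{\mathrm{vN}}[\rho]$ under local boundary unitaries combined with~\eqref{eq:HRT}---gives $X\cap I^-[\mathscr{I},M[\mu]]=\varnothing$.

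Next, the central geometric step is to show that $\gamma$ can be taken to lie entirely inside $O_W[\mu]$. Strict stability of $\mu$ together with the null convergence condition ensures that surfaces immediately inside $\mu$ along $-k^a$ are trapped, so a future-directed timelike curve from $p$ venturing into the interior cannot later escape to $\mathscr{I}$: a Penrose-type focusing argument, reinforced by the no-evaporating-singularities assumption, forbids such a curve from returning to the outer region. Hence $\gamma$ must initially enter $O_W[\mu]$; moreover, $\gamma$ cannot exit $O_W[\mu]$ through $\mu$ or through $\mathscr{I}$ except by terminating, and the no-evaporating-singularities assumption together with Lemma~\ref{lem:H1H2} controls its behavior near Cauchy horizons of the homology hypersurface $H$, allowing $\gamma$ (or a modified curve) to be chosen inside $O_W[\mu]$ all the way to $\mathscr{I}$.

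Because $(M,g)$ and $(M[\mu],g[\mu])$ agree on $O_W[\mu]$, this curve also defines a future-directed timelike curve in $(M[\mu],g[\mu])$ from $p$ to $q$. Concatenating with the future-directed generator of $N_{-k}[\mu]$ from $X$ to $p$ produces a future-directed causal curve from $X$ to $q\in\mathscr{I}$ containing a timelike segment; by the push-up property this places $X\in I^-(q,M[\mu])\subset I^-[\mathscr{I},M[\mu]]$, contradicting causal wedge inclusion. The main obstacle is the central geometric step: demonstrating that $\gamma$ can be chosen within $O_W[\mu]$. This requires careful control of timelike curves near the Cauchy horizons of $H$, precisely where global hyperbolicity of $(M,g)$ is lost; strict stability, the null convergence condition, and the no-evaporating-singularities assumption all play essential roles.
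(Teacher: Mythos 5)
There is a genuine gap, and it sits exactly where you flag the ``main obstacle'': the central geometric step --- confining the hypothetical timelike curve $\gamma$ from $p\in\mu$ to $q\in\mathscr{I}$ inside $O_W[\mu]$ --- cannot be carried out. Since $\mu$ is the edge of the homology hypersurface $H$, the future Cauchy horizon $\partial^+ O_W[\mu]$ emanates from $\mu$ in the outward null direction, and every future-directed \emph{timelike} curve from $p\in\mu$ immediately enters $I^+[\mu]$, which lies outside $\overline{D[H]}=\overline{O_W[\mu]}$. (In the flat model $H=\{t=0,\,|x|\ge 1\}$ one has $D^+[H]=\{|x|\ge 1+t\}$, which meets $I^+$ of the edge only at the edge itself.) So $\gamma$ necessarily spends time in $I^+[\mu]\setminus O_W[\mu]$, which is precisely the region where $(M,g)$ and $(M[\mu],g[\mu])$ are \emph{not} guaranteed to agree --- the coarse-graining replaces all data behind $\mu$ --- and the curve cannot be transported. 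Your fallback, that a ``Penrose-type focusing argument'' prevents a curve entering the trapped region from later escaping to $\mathscr{I}$, is circular: that escape is forbidden only under strong asymptotic predictability, which the paper explicitly does not assume and which is essentially the conclusion being established. Classical focusing gives geodesic incompleteness of $N_{+k}$, not causal disconnection from $\mathscr{I}$; in a spacetime with a naked singularity the escape you want to rule out is exactly what can happen.

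The paper's proof avoids this by never transporting a curve from $(M,g)$ into $(M[\mu],g[\mu])$. It first establishes causal wedge inclusion entirely within $(M[\mu],g[\mu])$ (this is where your first step also needs strengthening: ruling out a competitor HRT surface $X'$ beyond a Cauchy horizon uses purity of the global CFT state and complementary recovery to force $X'$ to be homologous to \emph{both} $\mathscr{I}$ and $\widetilde{\mathscr{I}}$, whence the no-evaporating-singularities assumption traps $X'$ inside $D[\Sigma_{\rm tot}]$; Lemma~\ref{lem:H1H2} plus area-matching alone do not do this). It then transfers the conclusion back to $(M,g)$ via the region $Z=J^+[\partial^+O_W[\mu]]-J^+[\mu]$, which \emph{can be chosen} to coincide in the two spacetimes, arguing that $\widetilde{i^+}\subset Z$ (else a maximal extension would put $\mathscr{I}$ in causal contact with $\mu$, violating causal wedge inclusion), so that the entire future boundary evolution lives in a region common to both spacetimes and disjoint from $J^+[\mu]$. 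You would need some analogue of this $Z$-argument; as written, your reduction to $O_W[\mu]$ fails at the very first instant of the curve.
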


\begin{proof} 
Let $\Sigma_{\mathrm{tot}}$ be an initial data slice of the spacetime $(M[\mu], g[\mu])$ consisting of $N_{-k}[\mu]$,
$H$, and $\widetilde{H}$.\footnote{Here $\widetilde{A}$ refers to the CPT conjugate of a quantity $A\in O_{W}[\mu]$ in
the coarse-grained spacetime.} Within $D[\Sigma_{\mathrm{tot}}]$, we know that the extremal surface $X[\mu]$
constructed via the above prescription is the minimal area extremal surface in $D[\Sigma_{\mathrm{tot}}]$. 
If this remains true for any permissible extension of the Cauchy horizon, then $X[\mu]$ must be the HRT surface of $(M[\mu], g[\mu])$. 

Suppose now that there is a permissible extension of the Cauchy horizon containing an extremal surface $X'$ also
homologous to $\mathscr{I}$. Since the total CFT state on all boundaries is pure, if $X'$ were to be the HRT surface of
$\mathscr{I}$,  then it would also be the HRT surface of the complement
$\widetilde{\mathscr{I}}$ by complementary recovery of classical holographic codes, and so it would have to be homologous to $\widetilde{\mathscr{I}}$. 
From the $\mathscr{I}$ and $\widetilde{\mathscr{I}}$ homology hypersurfaces of $X'$ we construct the complete hypersurface
$\Sigma_{\mathrm{tot}}'$ as the union (see illustration in Fig.~\ref{fig:Xprime}).
By the absence of evaporating singularities
and the requirement that homology hypersurfaces are compact in the conformal completion,
the region between $\Sigma_{\mathrm{tot}}$ and $\Sigma_{\mathrm{tot}}'$ is globally hyperbolic. 
But this means that $D[\Sigma_{\rm tot}'] = D[\Sigma_{\rm tot}]$, and so $X' \subset D[\Sigma_{\rm tot}]$. This
contradicts the assumption that $X'$ lies in the extension beyond the Cauchy horizon.
Thus the proof that $X[\mu]$ is the HRT surface from the case where $(M[\mu], g[\mu])$ is globally hyperbolic still
applies, and so $\mu$ must be behind the horizon in $(M[\mu], g[\mu])$.
\begin{figure}
\centering
\includegraphics[width=0.5\textwidth]{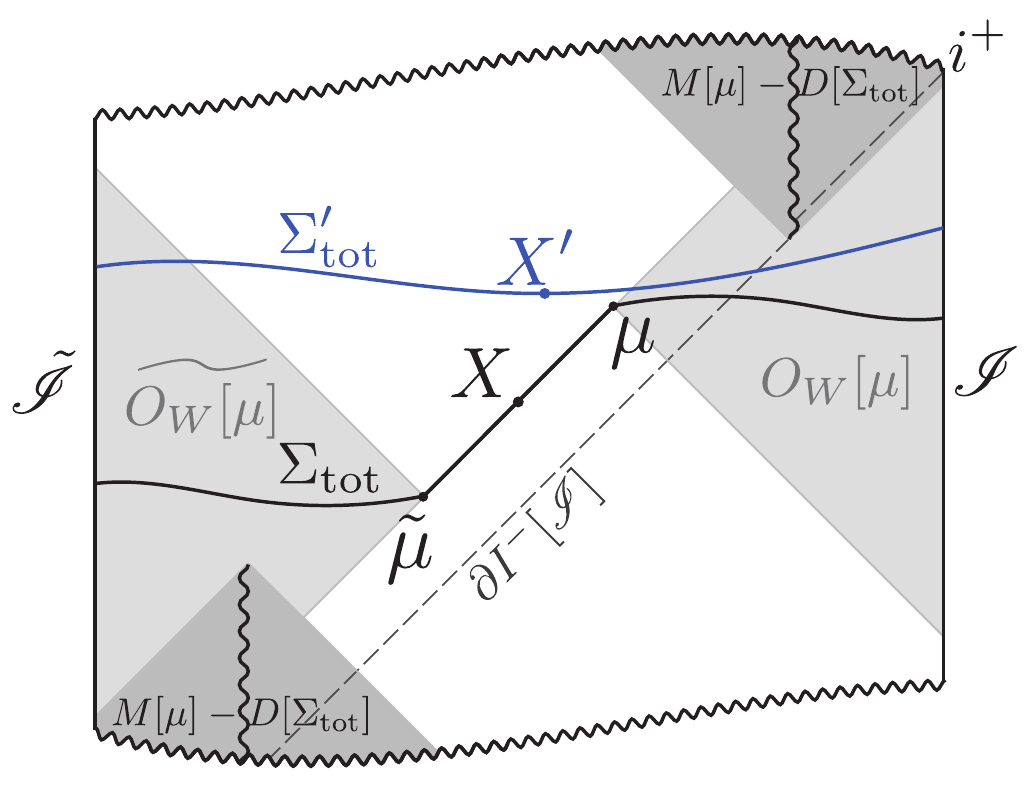}
\caption{Illustration of a coarse grained spacetime extended beyond Cauchy horizons into the dark gray regions. $X'$ is an extremal surface that is
    a candidate HRT surface. Since $X'$ must be homologous to both $\mathscr{I}$ and $\tilde{\mathscr{I}}$, it will lie
in $D[\Sigma_{\rm tot}]$. $\mu$ must be spacelike to $\tilde{i}^+$ for any choice of extension $M[\mu]-D[\Sigma_{\rm tot}]$.}
\label{fig:Xprime}
\end{figure}

Consider now the subset $Z = J^+[\partial^+ O_W[\mu]] - J^+[\mu]$  of $M[\mu]$, which we
can choose so that it agrees with the corresponding subset of $M$ (see Fig.~\ref{fig:wedgeExt} for an example). 
If $\widetilde{i^+} \not\subset Z$, then CFT evolution proceeds beyond $\mathscr{I}\cap Z$, and
$D[\Sigma_{\rm tot}] \cup Z$ can be extended further. However, a maximal extension 
of $D[\Sigma_{\rm tot}] \cup Z$ must put
$\mathscr{I}$ in
causal contact with $\mu$, violating causal wedge inclusion, and so we conclude that
$\widetilde{i^+} \subset Z$. Thus $\mu$ must lie behind the horizon in $O_W[\mu] \cup Z$, and by the
absence of closed timelike curves this must remain true when completing $O_W[\mu] \cup Z$ into
$M$.
\begin{figure}
\centering
\includegraphics[width=0.2\textwidth]{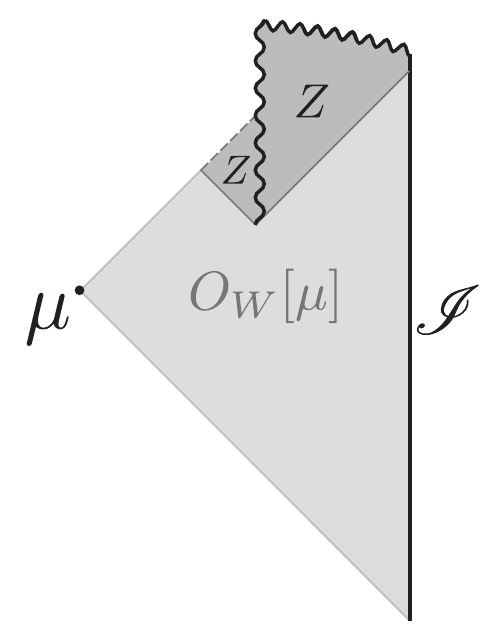}
\caption{Example of a Cauchy extension $Z$ of the outer wedge that is spacelike to $\mu$. The fine-grained
spacetime $M$ will induce a particular choice of $Z$. It is possible to choose the coarse 
grained spacetime so that $O_W[\mu]\cup Z \subset M[\mu]$.}
\label{fig:wedgeExt}
\end{figure}
\end{proof}

\section{Trapped Surfaces Lie Behind Event Horizons}
Having argued that minimar surfaces lie behind event horizons, we proceed to show the main result of this article, that
\textit{trapped} surfaces lie behind event horizons. The idea is simple: prove that between a trapped surface and
the asymptotic boundary there
always exists at least one minimar surface. We will make use of the following theorem:
\begin{thm}[\cite{AndMet07, AndEic10}]
    \label{thm:existence}
    Let $\Sigma$ be a compact spacelike hypersurface with an inner boundary and an outer boundary. Assume that the inner
    boundary is outer trapped and that the outer boundary is outer untrapped. Then $\Sigma$ contains a smooth stable marginally outer trapped surface $\sigma$. 
\end{thm}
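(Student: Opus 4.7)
The plan is to construct $\sigma$ as the blow-up locus of a capillary-regularized Jang equation on $\Sigma$, following the strategy pioneered by Andersson and Metzger. Let $\bar{g}$ and $k$ denote the induced metric and extrinsic curvature of $\Sigma$ in $(M,g)$. Jang's equation for $f: \Sigma \to \mathbb{R}$ demands that the mean curvature of the graph $\{(x,f(x))\}$ in the Riemannian product $(\Sigma \times \mathbb{R},\, \bar g + dt^2)$ equal a suitable trace of $k$ over the graph. The driving geometric fact is that solutions to Jang's equation blow up precisely along marginally outer trapped surfaces in $\Sigma$: if $f \to +\infty$ on a codimension-one set $\sigma \subset \Sigma$, the limit configuration is the cylinder $\sigma \times \mathbb{R}$, and the defining equation of this cylinder is exactly $\theta_k = 0$ on $\sigma$.

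First I would solve the capillary-regularized equation obtained by adding a term $\tau f$ (for small $\tau > 0$) to the Jang operator, with vanishing Dirichlet data on $\partial \Sigma$. The regularization renders the equation uniformly elliptic, and standard quasilinear theory (Leray--Schauder with a Moser-type a priori estimate) yields a smooth solution $f_\tau$. I would then use the boundaries of $\Sigma$ as barriers: the outer-trapped inner boundary ($\theta_k < 0$) acts as a subsolution forcing $\{f_\tau\}$ to blow up to $+\infty$ somewhere in the interior as $\tau \to 0$, while the outer-untrapped outer boundary ($\theta_k > 0$) provides a supersolution preventing blow-up near it. Hence the blow-up locus $\sigma$ lies strictly in the interior of $\Sigma$.

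The remaining task is to promote $\sigma$ to a smooth stable MOTS. The graphs of $f_\tau$ have uniformly bounded mean curvature in $\Sigma \times \mathbb{R}$, so by compactness of almost-minimizing integral currents they converge to a cylinder over an integer rectifiable set $\sigma$ on which $\theta_k = 0$ holds weakly. Schoen--Simon regularity then promotes $\sigma$ to a smooth embedded submanifold away from a singular set of codimension at least seven, which is empty in the dimensions of physical interest. Stability is obtained by selecting $\sigma$ to be the outermost component of the blow-up locus, since outermost MOTS have non-negative principal eigenvalue of the (non-self-adjoint) stability operator $L_\sigma$ by a standard comparison argument with the barrier provided by the outer-untrapped region. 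The hard part is exactly this regularity-and-stability transfer: $L_\sigma$ fails to be self-adjoint, so ordinary minimal-surface stability inheritance does not apply verbatim, and controlling the principal spectrum of $L_\sigma$ through a degenerate limit of Jang graphs is the technical core of the Andersson--Metzger--Eichmair construction.
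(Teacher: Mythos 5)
The paper does not actually prove Theorem~\ref{thm:existence}: it is imported wholesale from the cited references, so there is no in-paper argument to compare against. What you have written is, in substance, Eichmair's proof of this existence result (the Plateau problem for MOTS via forced blow-up of the capillary-regularized Jang equation), which is one of the two cited routes. The other cited route, Andersson--Metzger's original argument, is genuinely different: it avoids Jang's equation altogether and instead iterates local existence and deformation results for marginally outer trapped surfaces, combined with their curvature estimates and compactness theorem for stable MOTS, to produce the outermost MOTS directly. Your outline correctly identifies the essential mechanism of the Jang-equation route --- the regularization $H - \mathrm{tr}\,k = \tau f$, barriers built from the trapping hypotheses on the two boundary components, geometric-measure-theoretic compactness of the almost-minimizing Jang graphs, Schoen--Simon regularity (which is precisely why the paper's footnote restricts the theorem to $3 \leq D \leq 8$), and the need to control the non-self-adjoint stability operator.

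Two imprecisions are worth flagging. First, the barrier mechanism is usually set up so that the regularized solutions are driven to diverge in \emph{opposite} directions near the two boundary components; the MOTS appears as the cylindrical limit in the transition region where the graphs become vertical, rather than as a single blow-up to $+\infty$ pinned by a subsolution at the inner boundary. Second, stability of the resulting MOTS does not require passing to the outermost component of the blow-up locus: a cross-section of a cylindrical blow-up limit of Jang graphs is automatically stable (nonnegative principal eigenvalue of $L_\sigma$) by the limiting construction itself, as in Schoen--Yau and Galloway--Schoen. Invoking outermost-ness instead is also legitimate --- the outermost MOTS is stable by Andersson--Metzger --- but it is a separate argument from the Jang blow-up and mixes the two proofs. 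Neither point is a fatal gap; as a sketch of the known proof your proposal is sound.
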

\begin{figure}
\centering
\includegraphics[width=0.4\textwidth]{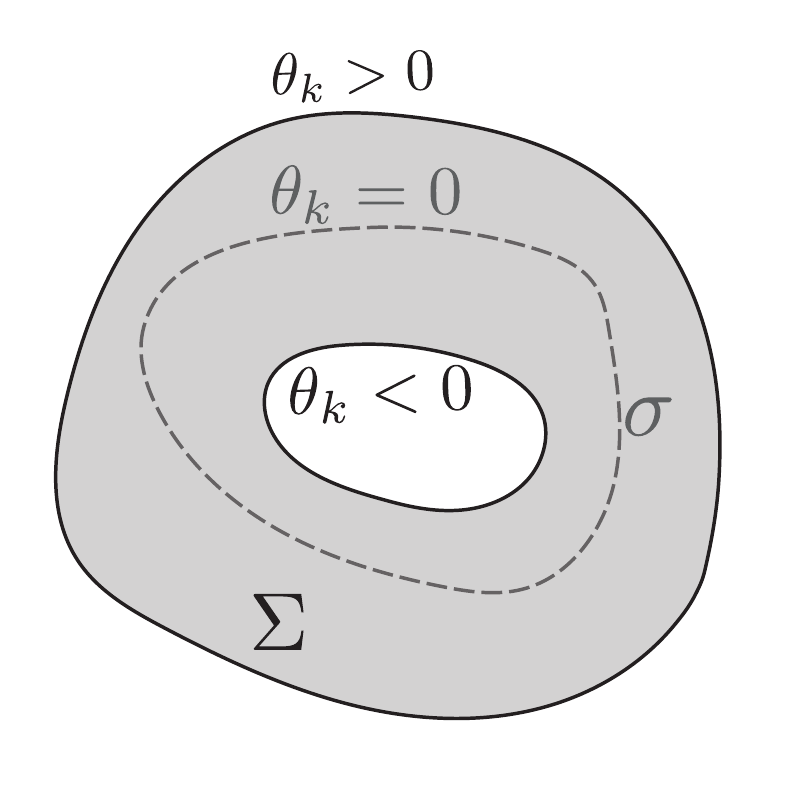}
\caption{Illustration of a compact spacelike hypersurface $\Sigma$ with an inner boundary that is outer trapped and an outer
boundary that is outer untrapped. By Theorem~\ref{thm:existence} a marginally outer trapped surface $\sigma$ in $\Sigma$ is
guaranteed to exist.}
\label{fig:MOTSthm}
\end{figure}
\noindent The quantities described in this theorem are illustrated in Fig.~\ref{fig:MOTSthm}. In fact, we will use a stronger version
of the above theorem that describes how $\sigma$ evolves with an evolving family of spatial slices. We will assume a generic condition on trapped surfaces which is frequently used in classical gravity proofs involved marginally trapped surfaces (see e.g.~\cite{AshKri03, BouEng15b}: $R_{ab}n^a n^b + \sigma_{ab}\sigma^{ab} > 0$, where $\sigma_{ab}$ is the shear tensor and $n^{a}$ is the generator of a null congruence fired orthogonally from the trapped surface. This condition ensures that every leaf in a spacelike
hypersurface foliated by marginally outer trapped surfaces is strictly stable~\cite{AshKri03}. Note that in the following we will only require this condition inside of the outer wedge. 

Combining now Theorem 2.1, 3.1 and 6.4 of \cite{AndMars08} with the assumption of genericity and its implication of
strict stability \cite{AshKri03}, we have the following:\footnote{The theorem is only proven for $D=4$ -- we will assume
it holds for $D\geq 3$, which appears likely given that Theorem~\ref{thm:existence} has been proven for $3\leq D \leq 8$ \cite{AndEic10}.} 
\begin{thm}[\cite{AndMars08}]\label{thm:existence2} Let $\Sigma$ be a spacelike hypersurface and let $\partial \Sigma$ consist of two disconnected components $\sigma_{1}$ and $\sigma_{2}$. Assume further that one of these components, $\sigma_{1}$ is outer trapped, while the other $\sigma_{2}$ is outer untrapped. Then the boundary of the outer trapped region on $\Sigma$, denoted $\sigma_{t}$, is (1) a smooth, strictly stable, marginally outer trapped surface homologous to  $\sigma_{2}$. Furthermore, if a spacetime region admits a foliation $\{\Sigma_{t}\}$ by such hypersurfaces, then the union $  \mathcal{H} = \bigcup_{t \in [0, T]} \sigma_{t}$ is a piecewise smooth spacelike manifold with a finite number of connected components. \end{thm}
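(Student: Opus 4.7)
The plan is to get existence first from Theorem~\ref{thm:existence}, then upgrade to the sharper statement using the structure of the MOTS equation and the genericity condition. Define the outer trapped region $T \subset \Sigma$ to be the union of all closed outer trapped surfaces in $\Sigma$ that bound $\sigma_{1}$ together with their enclosed interiors (relative to $\sigma_{2}$). By hypothesis $\sigma_{1} \subset T$ and $T$ is disjoint from a neighborhood of $\sigma_{2}$, so $\sigma_{t} := \partial T$ (taken in $\Sigma$) is a nonempty hypersurface separating $\sigma_{1}$ from $\sigma_{2}$ and is therefore homologous to $\sigma_{2}$ by construction.

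Next I would argue that $\sigma_{t}$ is marginally outer trapped via a one-sided barrier / viscosity argument. At any $p \in \sigma_{t}$, any smooth outward perturbation of $\sigma_{t}$ lies entirely in $\Sigma \setminus T$, where no outer trapped surface exists, forcing $\theta_{k} \geq 0$ on $\sigma_{t}$ in the viscosity sense; any smooth inward perturbation can be approximated by elements of $T$, giving $\theta_{k} \leq 0$. The MOTS equation $\theta_{k} = 0$ is a quasilinear second-order elliptic PDE on the graph function over a local reference surface, so standard regularity theory promotes the resulting weak solution to a smooth embedded surface. This is the analytic heart of the argument and is essentially Theorem~\ref{thm:existence}; I expect the main technical obstacle to be the elliptic regularity of the outermost MOTS, which is the content of the Andersson--Mars construction.

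For strict stability I would use the genericity condition $R_{ab}n^{a}n^{b} + \sigma_{ab}\sigma^{ab} > 0$ assumed inside the outer wedge. The stability operator $L$ associated to variations $\theta_{k}$ under outward normal deformations satisfies, on a marginally outer trapped surface, an identity whose zeroth-order term is precisely $R_{ab}n^{a}n^{b} + \sigma_{ab}\sigma^{ab}$ plus divergence terms. Under the genericity assumption, a Rayleigh-quotient estimate (as in \cite{AshKri03}) shows that the principal eigenvalue $\lambda_{0}(L)$ is strictly positive, which is exactly strict stability and also makes $L$ invertible on an appropriate Sobolev space.

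For the foliation part, I would apply the implicit function theorem leaf by leaf: since $L$ is invertible, the MOTS equation on each $\Sigma_{t}$ has a smooth local solution curve $t \mapsto \sigma_{t}$, and differentiating shows $\mathcal{H}$ is a spacelike (or null in degenerate limits) smooth manifold on each such interval. Global compactness of $[0,T]$, together with uniform a priori area and curvature bounds for strictly stable MOTS bounded between $\sigma_{1}$-like and $\sigma_{2}$-like barriers, rules out accumulation of bifurcation parameters; the remaining bifurcation points are isolated and finite in number, giving piecewise smoothness and finitely many components. The subtle part here is excluding an accumulation of jumps of the outermost MOTS as $t$ varies, which requires that the family of MOTS be compact in $C^{2}$ with uniform stability, both of which follow from the generic strict stability and the barriers provided by $\sigma_{1}$ and $\sigma_{2}$.
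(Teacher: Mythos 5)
A structural point first: the paper does not prove Theorem~\ref{thm:existence2}. It is imported by combining Theorems 2.1, 3.1 and 6.4 of \cite{AndMars08} with the observation, via \cite{AshKri03}, that the genericity condition upgrades stability of the leaves to strict stability. So what you have written is necessarily a reconstruction of the cited literature, and at the level of skeleton it is the right one: define the outer trapped region, show its boundary is a smooth stable MOTS homologous to the outer barrier, use genericity for strictness, and propagate in $t$ by an implicit-function-theorem argument with jumps at bifurcations. That is indeed the Andersson--Mars strategy.

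As an actual proof, however, there are genuine gaps. The most serious is the regularity of $\sigma_t=\partial T$: your viscosity/barrier argument presupposes that $\partial T$ is locally a graph over a reference surface to which elliptic regularity can be applied, but a priori $T$ is just a union of trapped surfaces and their interiors, and its boundary could fail to be even $C^0$-graphical. Establishing that $\partial T$ is a smooth embedded MOTS is precisely the hard content of \cite{AndMet07, AndEic10} (curvature estimates for stable MOTS, the Jang-equation blow-up or Schoen iteration, and a surgery argument controlling topology and area in the limit); Theorem~\ref{thm:existence} only gives existence of \emph{some} stable MOTS between the barriers, not that the outermost one coincides with $\partial T$, so you cannot defer to it for this step. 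Second, the strict stability you establish --- positivity of the principal eigenvalue of the spatial stability operator $L$ --- is not the paper's operative definition, which is the cross-focusing condition $k^a\nabla_a\theta_\ell<0$ needed for the minimar property; moreover the zeroth-order coefficient of $L$ is not ``precisely $R_{ab}n^an^b+\sigma_{ab}\sigma^{ab}$ plus divergence terms'' (it involves $G_{ab}k^a\ell^b$, the intrinsic scalar curvature, and the square of the rotation one-form), whereas the genericity condition enters through the Raychaudhuri equation along $k$. Finally, the finiteness of the number of smooth components of $\mathcal{H}$ over $[0,T]$ is asserted via ``compactness rules out accumulation of bifurcations,'' which restates the claim rather than proving it; this finiteness is part of what Theorem 6.4 of \cite{AndMars08} actually establishes.
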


To apply Theorem~\ref{thm:existence2} to prove that trapped surfaces lie behind event horizon, we need $\sigma_t$ to also be (1) marginally trapped and (2) homologous to $\mathscr{I}$. To achieve these
conditions, we require the implementation of a mild assumption about the past of trapped surfaces ($\theta_{\ell}<0$, $\theta_{k}<0$). 
Since cosmic censorship is primarily concerned with the future of trapped surfaces, these assumptions do
not heavily constrain our results.

\begin{defn}
    We call a surface $\sigma$  past well-behaved if it
    is homologous to $\mathscr{I}$ and (1)
    $C[\sigma]=\partial^{-}O_{W}[\sigma] \cap \mathscr{I}$
    is a Cauchy slice of $\mathscr{I}$ and (2)  $\partial^{-} O_{W}[\sigma] \subset D$ for some interior of a domain of dependence $D$.\footnote{
This is morally equivalent to demanding the existence of an open set $\mathcal{O}$ around $\sigma$ that can be covered by surfaces that also fulfill requirement (1).} 
\end{defn}

With this in place we prove a lemma essential for our main result:
\begin{lem}\label{lem:mainlem}
    Let $\tau$ be a past well-behaved trapped surface in a generic spacetime $(M, g)$. Then there exists 
    a spacelike manifold $H[\tau]$ in $O_W[\tau]$ which is foliated by
    past well-behaved strictly stable marginally outer trapped surfaces homologous $\mathscr{I}$.
\end{lem}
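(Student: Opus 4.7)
The plan is to construct a foliation of a suitable region in $O_W[\tau]$ by spacelike hypersurfaces that each satisfy the hypotheses of Theorem~\ref{thm:existence2}, and then read off $H[\tau]$ as the union of the MOTS guaranteed by that theorem. First I would establish the existence of an outer untrapped surface $\sigma_{\mathrm{out}}$ in $O_W[\tau]$ which is homologous to $\mathscr{I}$: deep enough into the asymptotic region one always has normal surfaces with $\theta_k>0$ and $\theta_\ell<0$ (for asymptotically AdS spacetimes, e.g.\ the intersections of large-radius timelike tubes with a bulk Cauchy slice anchored near $C[\tau]$), and past well-behavedness of $\tau$ guarantees that such surfaces can be chosen to be homologous to $\mathscr{I}$ inside $O_W[\tau]$.

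Next I would build the foliation. Using past well-behavedness, $\partial^- O_W[\tau]$ lies in the interior of some domain of dependence $D$ and anchors on a Cauchy slice $C[\tau]$ of $\mathscr{I}$; pick a one-parameter family of boundary Cauchy slices $\{C_t\}_{t\in[0,T]}$ through $C[\tau]$ and extend them into the bulk to a smooth spacelike foliation $\{\Sigma_t\}$ of a neighborhood in $O_W[\tau]$ whose inner boundary on each $\Sigma_t$ is a piece of $\tau$ (or a continuous evolution of it — by continuity of the null expansions, the outer trapped region near $\tau$ persists on nearby slices) and whose outer boundary is the intersection of $\Sigma_t$ with a bulk tube around $\sigma_{\mathrm{out}}$. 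By construction, each $\Sigma_t$ has inner boundary outer trapped and outer boundary outer untrapped, so Theorem~\ref{thm:existence2} applies and the boundary $\sigma_t$ of the outer trapped region on $\Sigma_t$ is a smooth, strictly stable, marginally outer trapped surface homologous to the outer boundary of $\Sigma_t$; transitivity with the homology of $\sigma_{\mathrm{out}}$ then yields $\sigma_t$ homologous to $\mathscr{I}$. Setting $H[\tau]=\bigcup_{t\in[0,T]}\sigma_t$, the second part of Theorem~\ref{thm:existence2} gives that this is a piecewise smooth spacelike manifold.

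It remains to verify that each $\sigma_t$ is \emph{past well-behaved} and that strict stability is the full (not merely outer) stability needed downstream. Strict stability plus the generic condition $R_{ab}n^a n^b+\sigma_{ab}\sigma^{ab}>0$ (assumed inside $O_W[\tau]$) gives, via the Andersson–Krishnan argument, that the leaves $\sigma_t$ are strictly stable. Past well-behavedness of the $\sigma_t$ should follow from past well-behavedness of $\tau$ by continuity: the condition that $C[\sigma_t]=\partial^- O_W[\sigma_t]\cap\mathscr{I}$ be a Cauchy slice of $\mathscr{I}$ and that $\partial^- O_W[\sigma_t]$ be contained in the interior of a domain of dependence are open conditions on $t$, and by shrinking $T$ they persist throughout the foliation. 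The main obstacle I anticipate is controlling the global behavior of the foliation: one must ensure that $H[\tau]$ stays inside $O_W[\tau]$ (so that the genericity assumption, posited only there, is available along the whole family), that the inner boundary remains outer trapped as $t$ varies (so that Theorem~\ref{thm:existence2} keeps applying), and that no $\sigma_t$ escapes through the past Cauchy horizon of $O_W[\tau]$. These are technical continuity statements, but they are the part of the argument where one has to be careful rather than relying on a single clean black-box application of Theorem~\ref{thm:existence2}.
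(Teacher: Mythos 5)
Your proposal is correct and follows essentially the same route as the paper: foliate a region of $D\cap O_W[\tau]$ by spatial slices anchored on boundary Cauchy slices, with trapped inner boundaries and untrapped outer boundaries supplied by the AdS asymptotics, apply Theorem~\ref{thm:existence2} to each leaf, and deduce past well-behavedness from containment of the whole family in $D$. The one technical worry you flag --- keeping the inner boundary outer trapped as $t$ varies --- is resolved in the paper by taking the inner boundaries to lie on $N_{k}[\tau]$, where the null convergence condition keeps them trapped for the entire family rather than only for slices near $\tau$.
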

\begin{proof}
    Pick a smooth one-parameter family of non-intersecting spatial slices
    $\Sigma_{t}$ for $t\in[0, T]$ that are all contained in $D\cap O_W[\tau]$, where $D$ is the interior of the domain
    of dependence containing $\partial^{-}O_W[\tau]$. Pick the hypersurfaces $\Sigma_{t}$ so that they
    are anchored on Cauchy slices $C_t$ of the conformal boundary, and pick the inner boundaries of $\Sigma_{t}$ so they
    are either sufficiently close to $\tau$ or lying on $N_{k}[\tau]$, so that they are also trapped.
    By AdS asymptotics and global hyperbolicity in $D$, we can always choose the family so that each $\Sigma_t$ has an
    untrapped surface near $\mathscr{I}$. Theorem~\ref{thm:existence2} now guarantees the existence of a set $\mathcal{H}$ which is the union of smooth spacelike
    manifolds foliated by smooth strictly stable marginally outer trapped surfaces $\mu_{t}$ homologous to $C_{t}$ (and thus
    $\mathscr{I}$), together with surfaces where $\mathcal{H}$ jumps. See Fig.~\ref{fig:MOTT}. Since we choose our foliation so $\mathcal{H}\subset D$, 
    the $\mu_{t}$ are also past well-behaved.
    Hence, any of the connected components of $\mathcal{H}$, after removing jump surfaces, satisfies all of our claimed properties of $H[\mu]$.
    \begin{figure}
    \centering
    \includegraphics[width=0.4\textwidth]{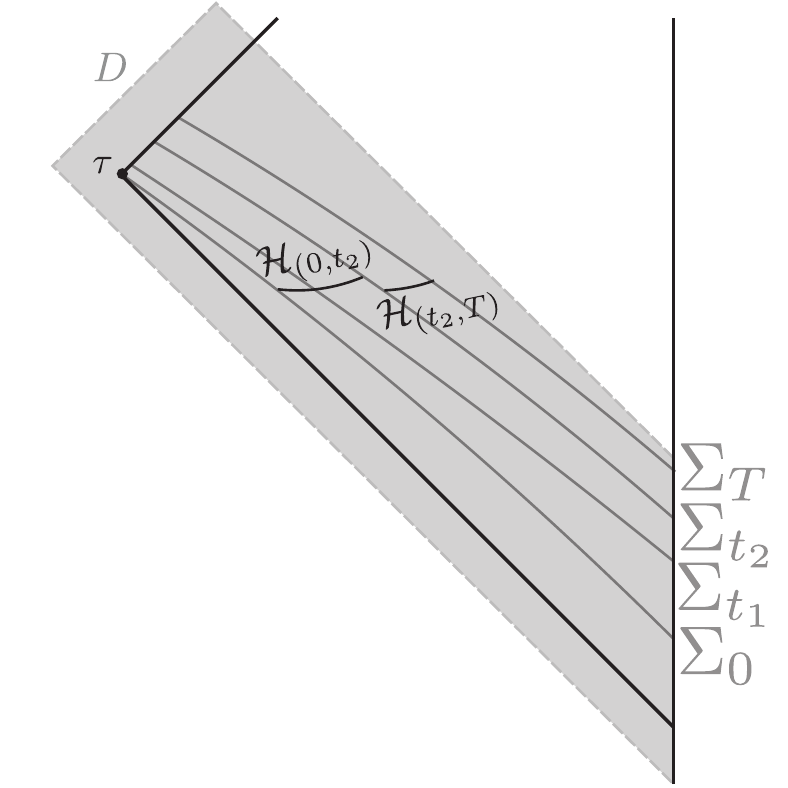}
        \caption{An example of the construction of the manifold $H[\tau]$ outside a past well-behaved trapped
        surface $\tau$. $\Sigma_{t}$ is a one-parameter family of spatial slices, and both $\mathcal{H}_{(0,
        t_2)}$ and $\mathcal{H}_{(t_2, T)}$ are spacelike manifolds foliated by marginally outer trapped surfaces.
        $H[\tau]$ can be taken to be either.}
    \label{fig:MOTT}
    \end{figure}
\end{proof}
    To find our main result, all that is missing is that at least one of the marginally outer trapped surfaces  from the previous lemma is also inner trapped. 
By the past well-behaved assumption, these surfaces cannot be inner untrapped (so-called ``anti-normal''). If the spacetime were spherically symmetric we would now be done: by past well-behavedness we
    can fire a spherically symmetric past horizon from $\mathscr{I}$ that hits $\mu_{t}$. Choosing $\Sigma_{t}$ (and thus
    $\mu_t$) to respect the spherical symmetry, $\mu_t$ would be contained in the horizon, and so
    $\theta_{\ell}[\mu_t]<0$, where strict inequality is guaranteed by our genericity condition. In the absence of spherical symmetry, this becomes harder to prove. We nevertheless find it to be a reasonable assumption for the following two reasons: (1) there is a continuously large amount of freedom in $\mu_t$ given that the choice of foliation $\Sigma_t$ is
    arbitrary. (2) By past well-behavedness we can fire a one-parameter family of horizons from the boundary in $D$ and pick
    each $\Sigma_t$ to lie arbitrarily close to a member of this family of horizons. Thus we can construct an $H[\mu]$ where every leaf
    lies arbitrarily close to a strictly inner trapped surface (strictness follows from genericity). 
    In fact, if we were allowed to use Theorem~\ref{thm:existence2} directly on null foliations with a measure zero set of non-differentiable points, then we
    could redo the proof of Lemma~\ref{lem:mainlem} with $\Sigma_{t}$ chosen to be future horizons, which would yield a proof that
    $\theta_{\ell}[\mu_t]<0$ as well. With these justifications in mind, we will simply assume that a past-well behaved $\tau$ has at
    least one choice of $H[\tau]$ containing an inner trapped leaf $\mu_t$. In light of the guaranteed existence of
    $H[\tau]$ by Lemma~\ref{lem:mainlem}, this can be considered as an addition to the definition of being past well-behaved.

Finally, we can show our main result. 
\begin{thm}
    \label{thm:mainthm}
    Let $\tau$ be a trapped surface. If $\tau$ is past well-behaved or has a trapped surface in its outer wedge which is past well behaved, then no future causal curves from $\tau$ can reach $\mathscr{I}$. In particular, $\tau$ lies behind an event horizon. 
\end{thm}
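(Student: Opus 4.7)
The plan is to combine Lemma~\ref{lem:mainlem} with Theorem~\ref{thm:muhide}: manufacture a minimar surface $\mu$ in $O_{W}[\tau]$, invoke the horizon-hiding theorem to place $\mu$ behind the event horizon, and then propagate that conclusion inward to $\tau$. I first reduce to the case that $\tau$ itself is past well-behaved. If instead $\tau$ only contains a past well-behaved trapped surface $\tau' \in O_{W}[\tau]$, applying the theorem recursively to $\tau'$ places $\tau'$ behind the event horizon, and the separation argument developed in the final paragraph then propagates this inward from $\tau'$ to $\tau$.

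Assuming $\tau$ is past well-behaved, Lemma~\ref{lem:mainlem} together with the added assumption on past well-behavedness yields a leaf $\mu \subset H[\tau]$ which is strictly stable, fully marginally trapped (with $\theta_{k} = 0$ and $\theta_{\ell} < 0$), and homologous to $\mathscr{I}$. To invoke Theorem~\ref{thm:muhide} I must additionally certify that $\mu$ is of least area on some homology hypersurface. The natural candidate is the piece of the spatial slice $\Sigma_{t_{0}}$ on which $\mu$ sits, restricted to the region between $\mu$ and $\mathscr{I}$; strict stability of $\mu$ furnishes a one-sided local area minimum, which can be promoted to a global minimum on a suitably deformed hypersurface via the constructions of~\cite{EngWal17b, EngWal18}. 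This identifies $\mu$ as a future minimar, and Theorem~\ref{thm:muhide} places $\mu$ behind the event horizon of $(M,g)$.

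It remains to propagate the conclusion from $\mu$ to $\tau$. Since $\mu$ is homologous to $\mathscr{I}$ and lies in $O_{W}[\tau]$, on an appropriate Cauchy-like slice $\Sigma$ through both $\tau$ and $\mu$ the surface $\mu$ partitions $\Sigma$ into an inner region containing $\tau$ and an outer region meeting $\mathscr{I}$. The set $I^{-}[\mathscr{I}] \cap \Sigma$ is open, contains a neighborhood of $\mathscr{I} \cap \Sigma$, and --- by Theorem~\ref{thm:muhide} applied to $\mu$ --- misses every point of $\mu$. Connectedness of the outer-communication component on $\Sigma$ then confines $I^{-}[\mathscr{I}] \cap \Sigma$ to the outer region, forcing $\tau \cap I^{-}[\mathscr{I}] = \varnothing$; an analogous argument with $J^{-}$ replacing $I^{-}$, exploiting achronality of $\mu$, dispenses with null causal curves from $\tau$ reaching $\mathscr{I}$. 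I expect the main obstacle to be the clean formalization of this last separation step in the non-globally-hyperbolic setting: past well-behavedness of $\tau$ and the absence of evaporating singularities are needed to guarantee that a suitable $\Sigma$ exists and that the relevant components of $I^{-}[\mathscr{I}]$ are connected as required, but given those inputs the topological reasoning is standard.
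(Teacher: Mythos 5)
There is a genuine gap at the central step. After extracting from Lemma~\ref{lem:mainlem} a leaf $\mu$ satisfying minimar conditions (1)--(3), you try to certify condition (4) --- global area minimality of $\mu$ on some homology hypersurface --- by claiming that strict stability gives a ``one-sided local area minimum'' which ``can be promoted to a global minimum on a suitably deformed hypersurface via the constructions of~\cite{EngWal17b, EngWal18}.'' This is not justified, and it is precisely the hard point. Strict stability ($k^a\nabla_a\theta_\ell<0$) is a statement about the variation of $\theta_\ell$, not about area minimization on a spatial slice; the leaf produced by Theorem~\ref{thm:existence2} is the boundary of the trapped region, i.e.\ locally \emph{outermost}, which is a different property from being locally (let alone globally) area-minimizing. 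Nothing in~\cite{EngWal17b, EngWal18} promotes a local minimum to a global one --- those constructions take minimality as an input (it is condition (4) of a minimar), so citing them here is circular. Nor is there any reason a generic such $\mu$ should be globally minimal on any homology slice of $(M,g)$.

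The paper's proof takes the opposite tack: it \emph{drops} condition (4) entirely and shows that past well-behavedness can substitute for it. The only role minimality plays in the coarse-graining argument is to guarantee that the extremal surface $X$ on $N_{-k}[\mu]$ sits on a Cauchy slice of the \emph{coarse-grained} spacetime on which it is minimal, so that $X$ is HRT there. The paper constructs such a slice explicitly: deform the anchoring slice $C[\mu]$ slightly to the past into $C_\epsilon$, take the past horizon $H=\partial I^+[C_\epsilon]\cap J^-[\Sigma]$, show its intersection $\mu_\epsilon$ with $N_{-k}[\mu]$ is a complete cut lying between $X$ and $\mu$ (using causal wedge inclusion to keep $X$ outside $I^+[C_\epsilon]$), and assemble $S=H\cup N_{-k}[\mu]\cup\tilde H$. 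Since $H$ is a horizon with $\theta_\ell<0$ terminating on the stationary congruence $N_{-k}[\mu]$, every surface in $S$ has area at least $\mathrm{Area}[X]$. Note that $S$ never even touches $\mu$ or $O_W[\mu]$ --- minimality of $\mu$ itself is never established and is not needed. Your overall skeleton (Lemma~\ref{lem:mainlem} plus Theorem~\ref{thm:muhide}, then propagate inward to $\tau$) matches the paper, and your final separation paragraph is a reasonable elaboration of the paper's one-line ``$\mu\subset O_W[\tau]$'' conclusion, but without the replacement of condition (4) by this explicit minimal slice in the coarse-grained geometry the argument does not close.
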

\begin{proof}
    By Lemma~\ref{lem:mainlem} we know there is a past well-behaved surface $\mu$ in $O_W[\tau]$ satisfying the
minimar conditions (1)--(3). We now show that for a surface satisfying past well-behavedness, only properties (1)--(3) of a minimar are needed for
the coarse grained spacetime to exist and having the required property that the extremal surface $X$ on $N_{-k}[\mu]$ is the HRT surface of $\mathscr{I}$. 
That is, the requirement of minimality of $\mu$ on a homology slice can be exchanged for past well-behavedness, and
Theorem~\ref{thm:muhide} still applies to $\mu$.

Existence of the coarse grained spacetime is immediate, since that only relies on properties (1)--(3), as shown in \cite{EngWal18}.
What remains is to show that $X$ is HRT. The crucial piece used for proving this in~\cite{EngWal18} was that $X$ is
contained in a Cauchy surface of the coarse grained spacetime in which it is minimal. 
We now show that there is such a Cauchy surface $S$ in the coarse grained spacetime even if we do not assume
minimality of $\mu$ on a homology slice. 

Let $\Sigma$ be the Cauchy surface of the coarse grained spacetime formed by union of the homology slice
of $\mu$ with respect to $\mathscr{I}$, its CPT conjugate, and $N_{-k}[\mu]$. Let us for now work in the maximal Cauchy development $D[\Sigma]$,
currently refraining from extending possible Cauchy horizons. 

By past well-behavedness in the original spacetime, $\partial^{-}O_{W}[\mu]$ is contained in a globally hyperbolic
set there, and there are no singularities in a small neighbourhood
around it. Since the coarse grained spacetime of $\mu$ shares the same data on $\partial^{-}O_{W}[\mu]$, any potential singularity in the coarse
grained spacetime has to be finitely separated from $\partial^- O_{W}[\mu]$.\footnote{We have not ruled out that there
is a stress tensor shock along $\partial^-O_W[\mu]$, but this would not ruin global hyperbolicity.}
Thus, $C[\mu]$ is contained in the interior of $D[\Sigma]$.
\begin{figure}
\centering
\includegraphics[width=0.5\textwidth]{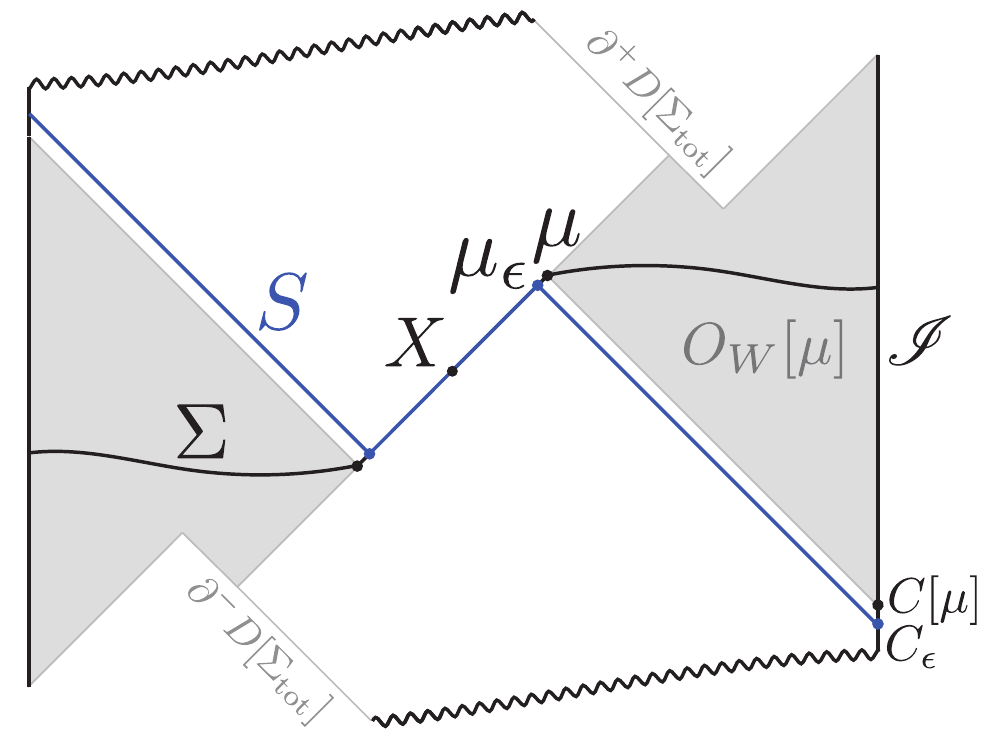}
\caption{Construction of Cauchy surface $S$ on which $X$ is minimal in the globally hyperbolic region of the coarse
grained spacetime $D[\Sigma]$.}
\label{fig:pseudomin}
\end{figure}
Generically, $C[\mu]$ will have kinks. 
Deform $C[\mu]$ slightly to the past into the Cauchy surface $C_{\epsilon}$ to smooth  out the kinks, and consider the past horizon $H = \partial I^+
[C_{\epsilon}] \cap J^-[\Sigma] $
and its intersection with the stationary null congruence, denoted $\mu_{\epsilon} = H \cap N_{-k}[\mu]$. See
Fig.~\ref{fig:pseudomin}. 
We know that if $\mu_{\epsilon}$ is nonempty, it must lie between $X$ and $\mu$ on $N_{-k}[\mu]$, since $X$ is not causally separated from
$\mathscr{I}$.
But since $\mu \subset I^+[C_{\epsilon}]$, we know that $\mu_{\epsilon}$ is nonempty and in fact a complete slice of $N_{-k}[\mu]$, since
otherwise a generator of $N_{-k}[\mu]$ going from $X$ to $\mu$ would be a causal curve that enters $I^+[C_{\epsilon}]$ without
intersecting $\partial I^{+}[C_{\epsilon}]$. 

We now take $S$ to be piecewise null Cauchy slice given by the union of $H$, $\tilde{H}$ and the part of
$N_{-k}[\mu]$ between $\mu_{\epsilon}$ and
its CPT conjugate. Since $H$ is a future horizon it has $\theta_{\ell}<0$, and since it ends on $N_{-k}[\mu]$, where
every slice has the same area as $\mu$, we get that any surface $\sigma$ contained in $H$ has $\mathrm{Area}[\sigma] \geq
\mathrm{Area}[\mu_{\epsilon}] = \mathrm{Area}[X]$. Clearly this also holds true for
$N_{-k}[\mu]$ and $\tilde{H}$, and thus for the whole of $S$.
Hence, we have constructed a Cauchy slice $S$ of $D[\Sigma]$ on which $X$ is minimal.
Even though $S$ does not intersect $O_W[\mu]$ or $\mu$, the proof from~\cite{EngWal18} that $X$ is the HRT
surface carries through. Furthermore, we are now free to add any permissible extension of the Cauchy horizon, and the
absence of evaporating singularities guarantees that $X$ remains that HRT surface, as shown in the proof of
Theorem~\ref{thm:muhide}. But now Theorem~\ref{thm:muhide} guarantees that $\mu$ lies behind the future event horizon
of $\mathscr{I}$, and since $\mu \subset O_W[\tau]$, $\tau$ lies behind the event horizon as well. 
\end{proof}

A simple corollary of our main result follows; we will relax our assumption that classical GR is devoid of evaporating singularities now:

\begin{cor}
    If there exists a past well behaved trapped surface $\tau$ in a classical asymptotically AdS spacetime $(M,g)$ satisfying the null convergence condition, then at least one of the following holds:
\begin{enumerate} 
	\item $(M,g)$ has an event horizon, and $\tau$ lies behind it;
	\item Classical GR admits solutions with evaporating singularities (in particular, there exist solutions with $O_{W}[\tau]$ and evaporating singularities);
    \item $(M, g)$ has no holographic dual.
	\end{enumerate}
\end{cor}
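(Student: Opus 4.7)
The plan is to argue by contrapositive and reduce the corollary to a direct application of Theorem~\ref{thm:mainthm}. Suppose that both (2) and (3) fail: $(M,g)$ does admit a holographic dual, so the HRT prescription and causal wedge inclusion are in force, and classical GR does not admit any solution containing $O_W[\tau]$ that harbors an evaporating singularity. The goal is then to deduce conclusion (1).

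First I would apply Lemma~\ref{lem:mainlem} to the past well-behaved trapped surface provided by hypothesis, extracting a past well-behaved, strictly stable marginally outer trapped surface $\mu \subset O_W[\tau]$ homologous to $\mathscr{I}$. Since $\mu \subset O_W[\tau]$, we have $O_W[\mu] \subset O_W[\tau]$, so the coarse-grained spacetime $(M[\mu], g[\mu])$ is itself a classical GR solution that contains $O_W[\mu]$, and hence a portion of $O_W[\tau]$, as a subregion. By the negation of (2), no such classical solution, including $(M[\mu], g[\mu])$ together with any permissible extension of its Cauchy horizons, can exhibit an evaporating singularity.

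With the absence of evaporating singularities established for $(M[\mu], g[\mu])$ and its permissible extensions, Lemma~\ref{lem:H1H2} (well-definedness of $O_W[\mu]$) and Theorem~\ref{thm:muhide} apply, and Theorem~\ref{thm:mainthm} then yields that $\tau$ lies behind the future event horizon of $\mathscr{I}$, which is conclusion (1). The main obstacle I anticipate is definitional rather than geometric: one must be careful that ``solutions with $O_W[\tau]$'' is broad enough to encompass $(M[\mu], g[\mu])$ together with all of its permissible extensions. The cleanest reading, which I would adopt explicitly, is that this class comprises any classical GR spacetime whose data agrees with $(M,g)$ on $O_W[\mu]$ for some $\mu$ produced by Lemma~\ref{lem:mainlem}; once this is granted, the corollary reduces directly to Theorem~\ref{thm:mainthm} with no new geometric content.
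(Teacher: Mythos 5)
Your proposal is correct and matches the paper's (implicit) reasoning: the corollary is stated without a separate proof precisely because it is the trichotomy repackaging of Theorem~\ref{thm:mainthm}, obtained by moving its standing assumptions (existence of a holographic dual and absence of evaporating singularities in classical GR) into the disjunction, exactly as you do by negating (2) and (3). Your definitional care about ``solutions with $O_W[\tau]$'' covering the coarse-grained spacetime and its permissible extensions is consistent with the parenthetical in item (2) and is the right reading.
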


\section{Discussion}\label{sec:disc}
We have shown from the holographic dictionary that trapped surfaces, hallmarks of strong gravity, must be cloaked from the
asymptotic boundary by event horizons. This proof comes at the heels of the holographic derivation of the Penrose
Inequality in~\cite{EngHor19}; both results constitute strong evidence in favor of validity of a version of weak cosmic
censorship in gravitational theories whose UV completion is well-described by AdS/CFT, even though the
conjecture might be false in gravitational theories without such a UV completion. That is, these derivations suggest that there may be a formulation of weak cosmic censorship that is enforced by quantum gravity in the classical
regime (which may not be valid as a statement about gravitational theories that do not exist as classical
limits to quantum gravity). It is thus tempting to speculate that the location of trapped surfaces may serve as a
``swampland'' criterion of sorts~\cite{Vaf05}. We will not subscribe to such an interpretation here, but rather discuss the precise implications of our results in holography.

Before we do so, however, let us briefly comment on the assumption that evaporating singularities are absent in
classical GR. This may seem like an odd assumption to make given that the actual singularity region is not described by
the classical theory. However, the absence of of evaporating singularities is a constraint on the \textit{predictions}
of the classical theory: regardless of whether we expect quantum corrections to be important in that regime, we may
still ask what purely classical gravity would predict. This does not, however, preclude an investigation of the location
of \textit{quantum} trapped surfaces~\cite{Wal10QST} and quantum cosmic censorship in (perturbative) quantum gravity where the
null energy condition is violated; that is indeed a natural next step.\footnote{NE thanks R. Bousso and M. Tomasevic for extensive conversations on this.}

\paragraph{Typicality of Black Holes:}
The proof that trapped surfaces imply the existence of event horizons pairs nicely with the Penrose singularity theorem~\cite{Pen64}, which states that 
trapped surfaces imply the existence of singularities, and so singularities are at least as typical as trapped surfaces. For an asymptotic observer dreaming of immortality, the Penrose singularity theorem can be a foreboding
omen.\footnote{Indeed, as Wald describes it~\cite{Wal97}, the formation of naked singularities would be a potential
        mechanism for a mad scientist to destroy the universe.} 
However, our result should provide some relief to the observer, since it shows that \textit{event horizons are just as
typical}, in the sense that every time the Penrose theorem is invoked to deduce the existence of a singularity in a holographic theory, we 
can also deduce the existence of an event horizon hiding this singularity. 
Since a black hole usually is defined by the existence of an event horizon, our result shows that black holes are as typical as
trapped surfaces. Black holes always appear whenever gravity gets strong enough to focus both ingoing and outgoing
lightrays. This is especially reassuring given the prevalence of gedankenexperiments (in holography and beyond!) whose conclusions are reliant upon the formation of event horizons as a direct consequence of typical gravitational collapse.

\paragraph{Trapped Surfaces, Horizons, and Complexity:} In general, knowledge of the spacetime geometry at finite time is insufficient to locate event horizons: the latter are teleological with respect to $\mathscr{I}$. Our result
provides a time-local constraint on the existence and location of event horizons from the behavior
of trapped surfaces, whose location (and existence) can be ascertained in the neighborhood of a single moment of time. This
suggests a potential application to the work starting with~\cite{SheSta14, RobSta14, MalShe15} on the connection between the event horizon and maximal chaos and fast scrambling in holography. Should particular moment-of-time data in the CFT be sufficient to deduce the existence of a trapped surface, this data must be also be sufficient to detect such signatures of event horizons, although the converse is false. This raises
an interesting question: if trapped surfaces are a time-local guarantee for the existence of event horizons, what is the CFT dual of a trapped surface?

Some work has already been done in this direction: in~\cite{EngWal17b, EngWal18}, it was conjectured that the coarse-grained spacetime associated to an apparent horizon $\mu$
is dual to a CFT quantum state $\rho_{\rm coarse}$ that preserves all correlators of ``simple CFT operators'' with simple sources turned on:
\begin{equation}
\Tr[\rho\mathcal{O}_{\mathrm{simple}}] = \Tr\left[\rho_{\mathrm{coarse}}
\mathcal{O}_{\mathrm{simple}}\right],
\end{equation}
where by simple operators we mean CFT operators with support in $\mathscr{I}\cap O_W[\mu]$ that result in causal propagation in the bulk, and we have suppressed the sources (and $\rho$ is the original state). 
On the other hand, the expectation value of highly complex operators is allowed to differ between the two
states.
Consistent with expectations about complexity of operators localized to the deep black hole interior \cite{HarHay13,BroGha19}, operators localized in the spacetime behind $\mu$ should not be simply reconstructible with access to $\mathscr{I}\cap O_W[\mu]$.
In particular, the simple entropy proposal, if correct, would imply that correlation functions of a small number of the local CFT primaries of
HKLL ~\cite{HamKab05, HamKab06, HamKab06b} restricted to $O_{W}[\mu]\cap \mathscr{I}$ should not contain the information about the spacetime behind the associated minimar surface $\mu$. Our
results serve as an important consistency check to this: if a trapped surface could be in the causal wedge of
$\mathscr{I}$, then there would -- by Theorem 3 -- be a minimar surface in the causal wedge; but then parts of the region behind $\mu$ would be reconstructible from simple operators in $\mathscr{I}\cap O_{W}[\mu]$ by HKLL, in contradiction with the expectation that only complex operators are sensitive to the physics behind $\mu$. 

Thus, since trapped surfaces are generally behind minimar surfaces, trapped surfaces appear to be a robust time-local
signal of bulk physics that cannot be reconstructed from the CFT with ``simple experiments'' (under assumption of the simple
entropy proposal). In fact, via the simple entropy conjecture, there are protocols that employ simple operators that allow reconstruction of physics outside of minimars but inside
horizons (see also upcoming work~\cite{EngPenTA}). We could speculate that minimars (and their associated holographic screens~\cite{BouEng15a,BouEng15b}) are the boundary inside of which simple reconstruction methods break down; since we have shown that these surfaces are not found in the asymptotic region, our results are suggestive of a
time-local separation between simple and complex physics in the bulk not dissimilar from the one presented
in~\cite{BroGha19}.\footnote{The reader may prima facie suspect a contradiction here: our results apply to the nearest marginally trapped surface, whereas the Python's lunch conjecture applies to the nearest extremal surface. The difference lies in having access to the entire boundary as opposed to just $O_{W}[\mu]\cap\mathscr{I}$.This tension will be discussed more explicitly in~\cite{EngPenTA}.}

This is of course in harmony with the proposed relation between operator CFT complexity and bulk depth in \cite{Sus18, Sus19, SusZha20}. In
upcoming work~\cite{EngFol21}, we give a precise covariant description in which regions deeper in the black hole are of higher complexity than their shallower counterparts.

\paragraph{Quantum Corrections and Black Hole Evaporation:} In purely classical gravity, the assumption of the absence of evaporating singularities is quite reasonable: to our knowledge, known examples of evaporating singularities always involve some violation of the null energy
condition. Thus our statement should really be viewed as a prediction of quantum gravity on the behavior of strictly classical GR with matter that is consistent with holography. What about quantum corrections? 

Once quantum backreaction is included -- in particular, once black holes can evaporate -- it is possible for trapped surfaces to lie outside of event horizons. Interestingly, however, this effect appears at later times in the evaporation process (see e.g.\cite{BouEng15c}). Thus it is tempting to speculate that even in the perturbative quantum gravity regime, at times $t$ much smaller than the Page time, the classical GR result remains valid: trapped surfaces lie behind the event horizon. We emphasize that this is purely speculative: our derivation does not apply in such a setting since the event horizon is teleological and takes into account the entire evaporation process. 

Might we expect a quantum version of our statement to be valid? That is, do \textit{quantum} trapped
surfaces~\cite{Wal10QST} lie behind event horizons? In~\cite{AlmMah19}, quantum extremal surfaces~\cite{EngWal14} ``outside'' of the
horizon were found; in this case the existence of non-standard boundary conditions at the AdS boundary were crucial. Interestingly,  the quantum focusing conjecture~\cite{BouFis15a} in this case nevertheless enforces the absence of causal communication between the quantum extremal surface and $\mathscr{I}$. These complications illustrate the subtleties that must be accounted for in formulating a quantum version of our proof, in which evaporating singularities can no longer be ignored: the absence of causal communication from quantum trapped surfaces to $\mathscr{I}$ is not equivalent to the absence of quantum trapped surfaces in the causal wedge. This suggests that the correct generalization may actually involve an understanding of whether communication can occur \textit{in practice} rather than whether or not it is forbidden by causal structure.

\section*{Acknowledgments}
It is a pleasure to thank C. Akers, R. Bousso, R. Emparan, S. Fischetti, G. Horowitz, P. Jefferson, S. Leutheusser, and M. Tomasevic for discussions.  This work is supported in part by NSF grant no. PHY-2011905 and the MIT department of physics. The work of NE is also supported in part by the U.S. Department of Energy, Office of Science, Office of High Energy Physics of U.S. Department of Energy under grant Contract Number  DE-SC0012567 (High Energy Theory research). The work of \AA{}F is also
supported in part by an Aker Scholarship. 

\bibliographystyle{jhep}
\bibliography{all}

\end{document}